\newtheorem{theorem}{Theorem}[section]
\newtheorem{lemma}[theorem]{Lemma}
\DeclareMathOperator*{\argmin}{arg\,min}
\newcommand{\eg}{\epsilon_{g}}
\newcommand{\ez}{\epsilon_{m0}}
\newcommand{\eo}{\epsilon_{m1}}
\newcommand{\ezi}{\epsilon_{m0,i}}
\newcommand{\eoi}{\epsilon_{m1,i}}
\newcommand{\egi}{\epsilon_{g,i}}
\renewcommand{\Pr}{\text{Pr}}
\newcommand{\rhohat}{\hat{\rho}}
\newcommand{\rhotilde}{\tilde{\rho}}
\title{A Bayesian Approach for Characterizing and Mitigating Gate and Measurement Errors}
\author[1]{Muqing Zheng}
\author[2]{Ang Li}
\author[1]{Tam{\'a}s Terlaky}
\author[1]{Xiu Yang\thanks{Email: xiy518@lehigh.edu}}
\affil[1]{Department of Industrial and Systems Engineering, Lehigh University, Bethlehem, PA 18015}
\affil[2]{Advanced Computing, Mathematics and Data Division, Pacific Northwest National Laboratory, Richland, WA 99354}
\date{}
\begin{document}

\maketitle

\begin{abstract}
    Various noise models have been developed in quantum computing study to 
    describe the propagation and effect of the noise which is caused by imperfect implementation of hardware. 
    Identifying parameters such as gate and readout error rates are critical to these models.
    We use a Bayesian inference approach to identify posterior distributions of these parameters, such that they can be characterized more elaborately. 
    By characterising the device errors in this way, we can further improve the accuracy of quantum error mitigation. 
    Experiments conducted on IBM's quantum computing devices suggest that our
    approach provides better error mitigation performance than existing
    techniques used by the vendor. 
    Also, our approach outperforms the standard Bayesian inference method in
    some scenarios. 

   \noindent\textbf{Keywords:} error mitigation, gate error, measurement error, Bayesian statistics
\end{abstract}


\section{Introduction}

While quantum computing (QC) displays an exciting potential in reducing the time complexity of various problems, the noise from environment and hardware may still undermine the advantages of QC algorithms~\cite{Preskill2018}. 
One of the solutions to this problem is quantum error correction
(QEC)~\cite{Raussendorf2007, colorABC, Aharonov2008, Flhmann2019, Grimsmo2020, Joshi2020}, 
which utilizes redundancy to protect the information of a single ``logic qubit'' from errors.
Two representative examples are surface code and color code due to their scalability and high error thresholds~\cite{Raussendorf2007, colorABC}. 
An alternative approach to QEC is bosonic codes. In this coding scheme, the
single-qubit information is encoded into a higher-dimensional system, like a
harmonic oscillator. One advantage of Bosonic codes is that it provides an
access to larger Hilbert space with less overhead than traditional QEC
codes~\cite{Flhmann2019, Grimsmo2020, Joshi2020}.


However, as described in~\cite{Preskill2018}, in the ``noisy intermediate-scale
quantum (NISQ)'' era, the small- or medium-sized but noisy quantum computers cannot afford 
the cost of QEC codes because they impose a heavy overhead cost in number of qubits and number of gates. 
As a result, quantum error mitigation (QEM) techniques have become attractive,
e.g.,~\cite{Wallman2016, Temme2017, Endo2018, kandala2019error, QDT2020,
Bravyi2020, Arute2020,Chen2019, Geller2020,funcke2022measurement}, since their cost is much lower than the QEC codes in terms of the circuit depth and the number of qubits.
One important area in the error mitigation study is to filter the measurement errors (or readout errors). 
These errors are usually modeled by multiplying a stochastic matrix with a
probability vector, as such to depict the influence of the noise on the output of QC algorithms.
More precisely, the probability vector represents the desired noiseless
output of a QC algorithm, the stochastic matrix describes how the noise affects this
output, and the resulting vector consists of the probabilities of observing each
possible state on the quantum device.
Here, the stochastic matrix can be constructed from conditional probabilities if only classical errors are considered, or from results of tomography if non-classical errors are not
significant~\cite{Chen2019, Geller2020, QDT2020, Bravyi2020}. 
Similarly, the study in~\cite{Takahashi2020} shows the possibility to simulate
bit-flip gate error in some quantum circuits in a classical manner. 

The goal of QEM from the algorithmic perspective is to recover the noise-free information using data from repeated experiments, which is usually achieved via statistical methods.
In the existing error models, the parameters, e.g., error rate of measurement or
gates, are usually considered as deterministic values (possibly with confidence
interval), and the goal is to filter the error in estimating the expectation of an operator.
Instead, by considering error mitigation as a stochastic inverse problem, we
adopt a new Bayesian algorithm from~\cite{Butler2018} to construct the
distributions of model parameters and use corresponding backward error models to
filter errors from the outcomes of a quantum device.
Note that our framework does not rely on the specific knowledge of the problems that quantum circuits want to answer, like in~\cite{Arute2020}, or hardware calibration, such as~\cite{Sheldon2016, Arute2019}. 
We aim to estimate the parameters more comprehensively for selected error models
as an inverse problem while error mitigation is achieved by using the error model in a backward direction.

The paper is organized as following. 
In Section~\ref{sec:error-models}, we provide the measurement error model based on independent classical measurement error and expand the gate error model in~\cite{Takahashi2020} to multiple-error scenario. 
In Section~\ref{sec:est}, we introduce the use of Bayesian algorithm in~\cite{Butler2018} to infer the distributions of parameters of measurement error and gate error models. 
Then, we demonstrate the creation of our error filter on IBM's quantum device \texttt{ibmqx2} (Yorktown) and apply our filter together with other existing error mitigation methods on measurement outcome from state tomography, an example of Grover's search~\cite{algs}, an instance of Quantum Approximate Optimization Algorithm (QAOA)~\cite{algs}, and a 200-NOT-gate circuit in Section~\ref{sec:exp}.
The code is available in~\cite{muqing_zheng_2022_7005312}.

\section{Error Models}\label{sec:error-models}

The goals of our error modeling include estimating the influence of bit-flip gate errors and measurement errors in the outputs of a quantum circuit without accessing any quantum device and recovering the error-free (or error-mitigated) output. Throughout this paper, we assume no state-preparation error and only focus on pure state measurements. The three error rates that we care about are as follows:
\begin{enumerate}[itemsep=0pt, topsep=2pt]
    \item $\epsilon_{g}$ = the chance of having a bit-flip error in a gate;
    \item $\epsilon_{m0}$ = the chance of having a measurement error when measure $\ket{0}$;
    \item $\epsilon_{m1}$ = the chance of having a measurement error when measure $\ket{1}$.
\end{enumerate}
It is reasonable to consider $\eg \neq 0.5$ and $\ez + \eo \neq 1$ in the current quantum computer~\cite{Bravyi2020,Arute2019}. This assumption is one of the necessary conditions for the existence of the error-mitigation solutions in our following models.


\subsection{Measurement Error} \label{sec:me}
As is demonstrated in~\cite{Chen2019}, classical measurement error is applicable in the device we conduct experiments on, i.e., \texttt{ibmqx2}. 
We build measurement error model using conditional probabilities.
Consider a single-qubit state $\alpha \ket{0} + \beta \ket{1}$, its distribution of the noisy measurement outcomes are
\begin{align*}
    \text{Pr}(\text{Measure 0 w/ noise}) &= |\alpha|^2 \cdot (1 - \epsilon_{m0}) + |\beta|^2 \cdot \epsilon_{m1} \\
    \text{Pr}(\text{Measure 1 w/ noise}) &= |\alpha|^2 \cdot \epsilon_{m0} + |\beta|^2 \cdot (1-\epsilon_{m1}),
\end{align*}
which is equivalent to
\begin{equation}
  \label{eq:error_mat0}
\small{
    \begin{bmatrix}
    1 - \ez & \eo \\
    \ez & 1 - \eo
    \end{bmatrix}
    \begin{pmatrix}
    \Pr(\text{Measure 0 w/o noise}) \\
    \Pr(\text{Measure 1 w/o noise})
    \end{pmatrix}
    =
    \begin{pmatrix}
    \Pr(\text{Measure 0 w/ noise}) \\
    \Pr(\text{Measure 1 w/ noise})
    \end{pmatrix},
}    
\end{equation}
where ``w/'' stands for ``with'' and ``w/o'' stands for ``without.''
Denoting $\ez$ and $\eo$ for qubit $i$ as $\ezi$ and $\eoi$, respectively, 
we can extend the matrix form in Eq.~\eqref{eq:error_mat0} to an $n$-qubit case
\begin{equation} \label{eq:me}
Ar = \rtilde, 
\end{equation}
where 
\begin{align*}
A &:= \bigotimes_{i = 1}^n\begin{bmatrix}1 - \ezi & \eoi \\ \ezi & 1 - \eoi\end{bmatrix}, \\
r &:=     \begin{pmatrix}
    \Pr(\text{Measure 0...00 w/o noise}) \\
    \Pr(\text{Measure 0...01 w/o noise}) \\
    \vdots \\
    \Pr(\text{Measure 1...11 w/o noise})
    \end{pmatrix}, \\
\rtilde &:=     \begin{pmatrix}
    \Pr(\text{Measure 0...00 w/ noise}) \\
    \Pr(\text{Measure 0...01 w/ noise}) \\
    \vdots \\
    \Pr(\text{Measure 1...11 w/ noise})
    \end{pmatrix},
 \end{align*}
$A_{ij}\in [0,1], r_i\in [0,1], \tilde r_i\in [0,1]$ by introducing the \textit{independence of measurement errors across qubits}.
We aim to identify $r$, but, in practice, we only have $\rtilde$ which is the probability vector characterizing the observed results from repeated measurements.
Note that $A$ is a nonnegative left stochastic matrix (i.e., each column sums to 1), so if $r \geq 0$ and its entries sums to 1, $\rtilde \geq 0$ and its entries also sums to 1.

If $\ezi$ and $\eoi$ for all $i = 1,...,n$ are known, the most straightforward denoising method derived from Eq.~\eqref{eq:me} is $r:=A^{-1}\rtilde$.
As $\ezi + \eoi \neq  1$ for all $i = 1,...,n$,
each individual 2-by-2 matrix has non-zero determinant. 
Thus $A$ has non-zero determinant and $A^{-1}$ exists. 
However, it is not guaranteed that $r^*$ is a valid probability vector. An alternative is to find a constrained approximation
\begin{equation} \label{eq:me-filter}
   r^*:= \argmin_{ \sum_{i = 1}^{2^n} r_i = 1, \forall i\in \{1,...,2^n\}\ r_i \geq 0} \| Ar - \rtilde \|_2.
\end{equation}

\subsection{Bit-flip Gate Error} \label{sec:bit-flip}



For the gate error, we focus on the single bit-flip error in this work, and we adopt the error model proposed in~\cite{Takahashi2020}.
Of note, there is no direct proof in~\cite{Takahashi2020} to validate this model.
In this section, we complete the proof and also extend it to a multiple-error case.
We first consider the case when there is only one gate and qubits could have bit-flip errors (all in the same rate $\eg$) after this gate, as shown in Figure~\ref{fig:bf}. 
\begin{figure}
    \centering
    \includegraphics[width=0.3\linewidth]{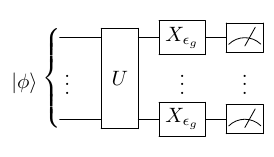}
    \caption{Single bit-flip error model. $U$ is a noisy-free gate and $X_{\epsilon_{g}}$ represents a bit-flip error with probability $\epsilon_{g}$.\label{fig:bf}}
\end{figure}

\subsubsection{Single Bit-flip Error} \label{sec:FE}

Let $p: \{0,1\}^n \rightarrow [0,1]$, where $n$ is the number of qubits, be the
Boolean function that represents the noise-free probability distribution of the
outcome of a QC algorithm and $x \in \{0,1\}^n$ denote the basis used in a QC algorithm. 
The Fourier expansion of this Boolean function is
\begin{equation}
    p(x) = \sum_{s \in \{0,1\}^n} \hat{p}(s)(-1)^{s.x} \label{eq:pFE},
\end{equation}
where $\hat{p}(s)$ is the Fourier coefficient of $p$ and $s.x = \sum_{i = 1}^ns_i\cdot x_i$ \cite[p.22]{anaBF}. 
These Fourier coefficients can be computed from
\begin{displaymath}
    \hat{p}(s) = \frac{1}{2^n}\sum_{x \in \{0,1\}^n} p(x) (-1)^{s.x}.
\end{displaymath}
Let $y$ be the erroneous version of $x$ induced by the bit-flip error. In other words, $y$ is a function of $x$ that adds bit-flip error into the measurement outcomes. The mathematical expression of $y$ is
\begin{displaymath} 
    y_i = \begin{cases}
    x_i &\text{ with probability $1 - \eg$} \\
    \neg x_i &\text{ with probability $\eg$}
    \end{cases} \text{ for $i = 1,...,n$}.
\end{displaymath}
Define $\ptilde:\{0,1\}^n \rightarrow [0,1]$ to be the expected distribution function of measurement outcomes under the noise model.
Then Eq.~\eqref{eq:pFE} implies
\begin{displaymath} 
    \ptilde(x) = E_{x}[p(y)] = \sum_{s \in \{0,1\}^n} \hat{p}(s)E_{x}[(-1)^{s.y} ].
\end{displaymath}
It is clear that
\begin{equation}
    \label{eq:ptunf}
\begin{aligned}
    E_{x}[(-1)^{s.y} ] &= E_{y}\left[\prod_{i=1}^n(-1)^{s_i \cdot y_i}\right] \\
    &= \prod_{i=1}^nE_{y}\left[(-1)^{s_i \cdot y_i}\right] \\
    &= \prod_{i=1}^n\left[\left(1- \eg\right)(-1)^{s_i \cdot x_i} + \eg(-1)^{s_i \cdot \neg x_i}\right].
\end{aligned}    
\end{equation}
Since $x_i$ and $s_i$ are binary bits, there are four possible cases:
\begin{itemize}[itemsep=0pt, topsep=2pt]
    \item $s_i = 0,x_i = 0$, then $\left(1- \eg\right)(-1)^{s_i \cdot x_i} + \eg(-1)^{s_i \cdot \neg x_i} = 1$;
    \item $s_i = 0,x_i = 1$, then $\left(1- \eg\right)(-1)^{s_i \cdot x_i} + \eg(-1)^{s_i \cdot \neg x_i} = 1$;
    \item $s_i = 1,x_i = 0$, then $\left(1- \eg\right)(-1)^{s_i \cdot x_i} + \eg(-1)^{s_i \cdot \neg x_i} = 1-2\epsilon_{g}$;
    \item $s_i = 1,x_i = 1$, then $\left(1- \eg\right)(-1)^{s_i \cdot x_i} + \eg(-1)^{s_i \cdot \neg x_i} = (1-2\epsilon_{g})\cdot (-1)$.
\end{itemize}
To summarize,
\begin{equation*} 
    \left(1- \eg\right)(-1)^{s_i \cdot x_i} + \eg(-1)^{s_i \cdot \neg x_i} = (1-2\epsilon_{g})^{s_i}(-1)^{s_i \cdot x_i},
\end{equation*}
for all $s_i \in \{0,1\}$ and $x_i \in \{0,1\}$.
Consequently, continuing from Eq.~\eqref{eq:ptunf},
\begin{displaymath}
    E_{y}[(-1)^{s.x} ] =  \prod_{i=1}^n\left[(1-2\epsilon_{g})^{s_i}(-1)^{s_i \cdot x_i}\right] = (1-2\epsilon_{g})^{|s|} (-1)^{s.x},
\end{displaymath}
where $|s| = \sum_{i = 1}^n s_i$. Thus, the $\ptilde$ with only one bit-flip error is
\begin{displaymath}
    \Tilde{p}(x) = E_x[p(y)] = \sum_{s \in \{0,1\}^n}  (1-2\epsilon_{g})^{|s|}\hat{p}(s)(-1)^{s.x}.
\end{displaymath}

\subsubsection{Extension to Multiple Bit-flip Errors} \label{sec:fem}

The extension is only applicable on gates that \textit{commute} with $X$ gate up to a global phase factor. This condition allows us to move occurred bit-flip errors to the end of the circuit, like the change from Figure~\ref{fig:bfmp} to~\ref{fig:bfm}, where $U_1,...,U_m$ are still noisy-free unitary gates. The model is constructed by repeatedly apply the previous proof procedure, instead of considering the cancellation of errors, since our interest is on individual gates but not on the accumulated one. 
\begin{figure}
     \centering
     \begin{subfigure}[b]{0.49\textwidth}
         \centering
        \includegraphics[width=\linewidth]{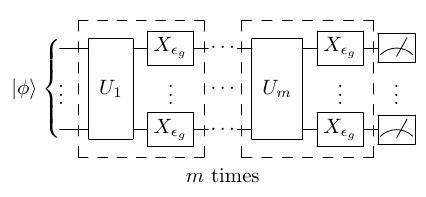}
        \caption{\footnotesize A practical multi-gate bit-flip error model\label{fig:bfmp}}
     \end{subfigure}
     \hfill
     \begin{subfigure}[b]{0.49\textwidth}
         \centering
        \includegraphics[width=\linewidth]{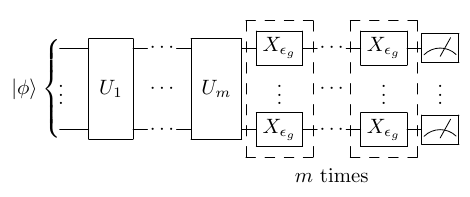}
        \caption{\footnotesize The converted model under commutativity condition\label{fig:bfm}}
     \end{subfigure}
        \caption{Circuit illustration of a bit-flip noise model and its restricted equivalence under commutativity assumption.}
        \label{fig:bfms}
\end{figure}

The expected distribution function $\Tilde{p}$ of circuit $U_m \cdots U_1\ket{\phi}$ with up to $m$ layers of bit-flip errors can be recursively defined by
\begin{align*}
    \Tilde{p}^{(1)}(x) &:= E_x[p(y^{(1)})] \\
    \Tilde{p}^{(j)}(x) &:= E_x[p^{(j-1)}(y^{(j-1)})] &&\text{for } j = 2,\dots,m \\
    \Tilde{p}(x) &= \Tilde{p}^{(m)}(x) =  E_x[p^{(m-1)}(y^{(m-1)})]
\end{align*}
where $p$ is the error-free output distribution,
\begin{displaymath} 
    y^{(1)}_i = \begin{cases}
    x_i &\text{ with probability $1 - \eg$} \\
    \neg x_i &\text{ with probability $\eg$}
    \end{cases}, \text{ and }
    y^{(j)}_i = \begin{cases}
    y^{(j-1)}_i &\text{ with probability $1 - \eg$} \\
    \neg y^{(j-1)}_i &\text{ with probability $\eg$}
    \end{cases},
\end{displaymath}
for $i = 1,...,n$ and $j = 2,...,m$ (to avoid the confusion, the superscripts on $\Tilde{p}$ are just indices). Because the expectations are all over $x$
not $s$, we can repeat the process in Section~\ref{sec:FE} $m$ times. 
Each repetition provides a $(1-2\epsilon_{g})^{|s|}$ term in the multiplication:
\begin{equation}
    \Tilde{p}(x) =  \sum_{s \in \{0,1\}^n} \left[ \left(\prod_{j = 1}^{m}(1-2\epsilon_{g})^{|s|}\right) \hat{p}(s)(-1)^{s.x} \right]  = \sum_{s \in \{0,1\}^n}  (1-2\epsilon_{g})^{|s| \cdot m}\hat{p}(s)(-1)^{s.x}. \label{eq:pm}
\end{equation}
Eq.~\eqref{eq:pm} is also straightforward to compatible with the case when each layer of bit-flip errors have a different error rate by indexing $\eg$ with $j$
\begin{displaymath}
    \Tilde{p}(x) =  \sum_{s \in \{0,1\}^n} \left[ \left(\prod_{j = 1}^{m}(1-2\epsilon_{g,j})^{|s|}\right) \hat{p}(s)(-1)^{s.x} \right].
\end{displaymath}

\subsubsection{Bit-flip Error Filter}

Let $j_b$ be the binary representation of a
non-negative integer $j$. Given $\epsilon_g$ and $\ptilde(x)$ for all $x \in
\{0,1\}^n$, it is possible to recover the noise-free outcomes of a QC algorithm. The
first step is to solve for $\phat(s)$.  With known $\epsilon_g$, $\ptilde(x)$,
and $x$, a linear system derived from Eq.~\eqref{eq:pm} can be built as following:
\begin{equation}
    G\rhohat = \rhotilde \label{eq:denoise-gate}
\end{equation}
where
\begin{align*}
    & G_{ij} := (1 - 2\eg)^{|(j-1)_b|m} (-1)^{(i-1)_b.(j-1)_b}\quad \text{for $i \in \{1,...,2^n\}$ and $j \in \{1,...,2^n\}$} \\
  &   \rhohat := \begin{pmatrix}
    \phat(0...00) \\  \phat(0...01) \\  \vdots \\ \phat(1...11)
    \end{pmatrix} , \qquad
    \rhotilde := \begin{pmatrix}
    \ptilde(0...00) \\ \ptilde(0...01) \\ \vdots \\ \ptilde(1...11)
    \end{pmatrix},
\end{align*}
$G \in [-1,1]^{2^n \times 2^n}$, $\rhohat \in
\left[-\frac{1}{2^n},\frac{1}{2^n}\right]^{2^n}$, and $\rhotilde \in
[0,1]^{2^n}$. Using the algorithm to be introduced in Section~\ref{sec:est}, we can
estimate the value of $\eg$ to construct matrix $G$. 
Using a sufficient number of measurements, we can compute vector $\rhotilde$. 
Thus, by solving Eq.~\eqref{eq:denoise-gate} and substituting the result into Eq.~\eqref{eq:pFE}, we can then re-construct the noise-free distribution function $p(x)$ for all $x \in \{0,1\}^n$. 
The following lemma implies that the solution of Eq.~\eqref{eq:denoise-gate} always exists.

\begin{lemma} \label{lemma:Ginv}
$G$ is full-rank for all $n \geq 1$.
\end{lemma}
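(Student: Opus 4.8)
The plan is to exhibit $G$ as a product of two manifestly nonsingular matrices. Observe that in $G_{ij}=(1-\eg)^{(j-1)m}(-1)^{(i-1)_b\cdot(j-1)_b}$ the scalar $(1-\eg)^{(j-1)m}$ depends only on the column index $j$, so right‑multiplication by a diagonal matrix rescales the columns. Concretely, $G=HD$, where $H$ is the $2^n\times 2^n$ sign matrix with entries $H_{ij}:=(-1)^{(i-1)_b\cdot(j-1)_b}$ and $D:=\operatorname{diag}\big(1,(1-\eg)^{m},\ldots,(1-\eg)^{(2^n-1)m}\big)$. It then suffices to show $\det H\neq 0$ and $\det D\neq 0$.

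The factor $D$ is immediate: since $\eg\in(0,1)$ we have $1-\eg\in(0,1)$, and each exponent $(j-1)m$ is a finite nonnegative integer, so every diagonal entry of $D$ is a strictly positive real and $\det D=\prod_{j=1}^{2^n}(1-\eg)^{(j-1)m}\neq 0$. For the factor $H$, the key point is that, under the indexing of rows and columns by $n$‑bit binary strings, $H$ is the Walsh--Hadamard matrix: writing $a_k,b_k$ for the bits of $i-1,j-1$, we have $(-1)^{(i-1)_b\cdot(j-1)_b}=\prod_{k=1}^{n}(-1)^{a_kb_k}$, hence
\[
H=\bigotimes_{k=1}^{n}\begin{bmatrix}1 & 1\\ 1 & -1\end{bmatrix}.
\]
Each $2\times 2$ block has determinant $-2\neq 0$, and the determinant of a Kronecker product is a product of positive‑integer powers of the determinants of its factors, so $\det H$ is a nonzero power of $-2$; equivalently, one checks directly that $H^2=2^nI_{2^n}$. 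Either way $H$ is nonsingular, and therefore $\det G=\det H\cdot\det D\neq 0$, i.e., $G$ has full rank $2^n$ for every $n\geq 1$.

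I do not anticipate a genuine obstacle. The only step requiring a moment of care is identifying the sign‑pattern matrix $H$ with the Walsh--Hadamard matrix under the chosen ordering of indices by binary representations; once that is done, nonsingularity of $H$ is classical and can alternatively be obtained by a short induction on $n$ via the recursive block structure $\begin{bmatrix}H_{n-1} & H_{n-1}\\ H_{n-1} & -H_{n-1}\end{bmatrix}$. Note that $m$ and the value of $n$ enter only through the requirement that the diagonal entries of $D$ be finite and positive, and no bound on $\eg$ beyond $\eg\in(0,1)$ is used.
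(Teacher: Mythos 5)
Your proof is correct, and it is the same idea as the paper's --- isolate the column-dependent scalar $(1-\eg)^{(j-1)m}$ from the sign pattern $(-1)^{(i-1)_b.(j-1)_b}$ and reduce everything to the nonsingularity of the Walsh--Hadamard sign matrix --- but the packaging differs in two ways. First, the paper writes $G$ as an \emph{entrywise} (Hadamard--Schur) product $G = G_1^{(n)} \circ G_2^{(n)}$ and then argues column by column that positive rescaling preserves linear independence, whereas you write the cleaner factorization $G = HD$ with $D$ diagonal and positive, which gives $\det G = \det H \cdot \det D$ immediately; your version makes the ``rescaling columns cannot destroy rank'' step a one-liner rather than a verbal argument. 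Second, for the sign matrix itself the paper proceeds by induction on $n$ via the Sylvester block recursion $\left[\begin{smallmatrix} G_2^{(n-1)} & G_2^{(n-1)} \\ G_2^{(n-1)} & -G_2^{(n-1)} \end{smallmatrix}\right]$ (stated somewhat tersely: the step that this block structure preserves full rank is asserted rather than justified, e.g.\ by column operations or the determinant identity), while you identify $H$ as the $n$-fold Kronecker power of the $2\times 2$ Hadamard matrix and invoke either $\det(A\otimes B)\neq 0$ for nonsingular factors or $H^2 = 2^n I$; the latter gives the explicit inverse $H^{-1} = 2^{-n}H$, which is a small bonus since the filter in Eq.~\eqref{eq:gate-opt} actually needs $G^{-1} = D^{-1}H^{-1}$. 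You also correctly note that your Kronecker/induction arguments are interchangeable, so the two proofs are equivalent in substance; yours is marginally more self-contained at the one point where the paper's induction step is left implicit.
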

\begin{proof}
We decompose $G$ as
\begin{displaymath}
    G = G_1^{(n)} \circ G_2^{(n)},
\end{displaymath}
where $\circ$ is element-wise multiplication, $G_1^{(n)} \in [0,1]^{2^n \times 2^n}$, $G_2^{(n)} \in \{-1,1\}^{2^n \times 2^n}$, and
\begin{align*}
    (G^{(n)}_1)_{ij} &:= (1 - 2\eg)^{|(j-1)|_b m} \quad\text{for $i \in \{1,...,2^n\}$ and $j \in \{1,...,2^n\}$} \\
    (G^{(n)}_2)_{ij} &:= (-1)^{(i-1)_b(j-1)_b} \quad\text{for $i \in \{1,...,2^n\}$ and $j \in \{1,...,2^n\}$}
\end{align*}
We start with $G_2^{(n)}$ when $n = 1$. It is easy to examine that
\begin{displaymath}
    G^{(1)}_2 = \begin{bmatrix} 1 &1 \\ 1&-1 \end{bmatrix}
\end{displaymath}
is full-rank. Recall that each entry of $G^{(n)}_2$ is $(-1)^{s.x}$ for binary
numbers $s$ and $x$, and each row of $G^{(n)}_2$ shares the same $x$ while each column shares the same $s$. 
Following the little-endian convention, the 16 entries in $G^{(2)}_2$ can be divided into four divisions equally based on their position
\begin{displaymath}
    \left[\begin{array}{c|c}
    s_2 = 0, x_2 = 0 & s_2 = 0, x_2 = 1 \\ \hline
    s_2 = 1, x_2 = 0 & s_2 = 1, x_2 = 1
    \end{array}\right].
\end{displaymath}
Namely,
\begin{displaymath}
    G^{(2)}_2 = \left[\begin{array}{c|c}
    (-1)^{0 \cdot 0}G^{(1)}_2 & (-1)^{0 \cdot 1}G^{(1)}_2 \\ \hline
    (-1)^{1 \cdot 0}G^{(1)}_2 & (-1)^{1 \cdot 1}G^{(1)}_2
    \end{array}\right] = \left[\begin{array}{c|c}
    G^{(1)}_2 & G^{(1)}_2 \\ \hline
    G^{(1)}_2 & -G^{(1)}_2
    \end{array}\right].
\end{displaymath}
Similarly, we can have
\begin{equation}
    G^{(n)}_2 = \left[\begin{array}{c|c}
    G^{(n-1)}_2 & G^{(n-1)}_2 \\ \hline
    G^{(n-1)}_2 & -G^{(n-1)}_2
    \end{array}\right] \label{eq:Gnrelation}
\end{equation}
Since $G^{(n)}_2 \in \{-1,1\}^{2^n \times 2^n}$, if $G^{(n-1)}_2$ is full-rank, the structure in Eq.~\eqref{eq:Gnrelation} implies $G^{(n)}_2$ is full-rank. As $G^{(1)}_2$ is also full-rank, by induction, $G^{(n)}_2$ is full-rank for all $n \geq 1$.

Note that the $j$th column of $G$ is the $j$th column of $G_2^{(n)}$ multiplied by $(1 - 2\eg)^{|(j-1)|_b m}$ and $1 - 2\eg \neq 0$. 
As all columns of $G_2^{(n)}$ are linearly independent, their non-zero multiples are linearly independent, too. 
Namely, all columns of $G$ are linearly independent, so $G$ is full-rank for all $n \geq 1$.
\end{proof}

However, similar to the problem in Section~\ref{sec:me}, solving Eq.~\eqref{eq:denoise-gate} cannot guarantee a meaningful $\rhotilde$, that is, a $\rhotilde \in [0,1]^{2^n}$. Nevertheless, we can consider a optimization problem instead.
\begin{equation}
\begin{aligned} \label{eq:gate-opt}
    \rhohat^* := \argmin &\ \|G\rhohat - \rhotilde \|_2 \\ \text{s.t. }&\\
    \sum_{i = 1}^{2^n}\sum_{j = 1}^{2^n}&\rhohat_i(-1)^{(i-1)_b.(j-1)_b} = 1 \\
    \sum_{i = 1}^{2^n}&\rhohat_i(-1)^{(i-1)_b.(j-1)_b} \geq 0 \quad \text{for all $j \in \{1,...,2^n\}$} 
\end{aligned}
\end{equation}
As an example, when $n = 1$, Eq.~\eqref{eq:pFE} yields
\begin{equation}\label{eq:gate-n1-p0}
\begin{aligned}
    p(0) &= \phat(0) + \phat(1) \\
    p(1) &= \phat(0) - \phat(1),
\end{aligned}    
\end{equation}
so $\phat(0)$ is always $\frac{1}{2}$ as $p(0) + p(1) = 1$. Thus, when $n=1$, Eq.~\eqref{eq:gate-opt} can be simplified as
\begin{equation}
\begin{aligned} \label{eq:gate-optn1}
    \rhohat^* := \argmin_{\rhohat_1 = \frac{1}{2}, -\frac{1}{2}  \leq \rhohat_2 \leq \frac{1}{2}} \ \|G\rhohat-\rhotilde \|_2 
\end{aligned}
\end{equation}


\section{Estimating Distributions of Noise Parameters}\label{sec:est}

The bit-flip gate error model Eq.~\eqref{eq:pm} and the measurement error model
Eq.~\eqref{eq:me} together provide us forward models to propagate noise in QC
algorithms.
Based on these forward models and measurement results from a QC device, we can
filtering out measurement errors and, in some scenarios, bit-flip gate errors,
as such to recover noise-free information.
Here, a critical step is to identify model parameters $\epsilon_g, \epsilon_{m0}$
and $\epsilon_{m1}$
using repeated measurements of a testing circuit. 
The Bayesian approach is suited to solving this inverse problem. 
In this work, we will use both the standard Bayesian inference and a novel
Bayesian approach called \emph{consistent Bayesian}~\cite{Butler2018} to infer
these parameters. 

\subsection{Computational Framework}

The Bayesian inference considers model parameters conditioned on data $d$ as the
posterior distribution $\pi(\lambda|d)$, which is proportional to the product
of the prior distribution parameters $\pi(\lambda)$ and the likelihood
$\pi(d|\lambda)$, i.e. $\pi(\lambda|d)\propto \pi(\lambda)\pi(d|\lambda)$. 
It infers the posterior distribution using the stochastic map 
$d = Q(\lambda) + \varepsilon$, where $Q$ is the quantity of interest (QoI) 
and $\varepsilon$ is an assumed error model. In our case, $\lambda$ represents
model parameters $\epsilon_g, \epsilon_{m0}$ and $\epsilon_{m1}$, $d$ is the
measured data collected from the device, and $\pi(d|\lambda)$ characterize the
difference between forward model output and the data. 

Unlike the standard Bayesian inference, the consistent Bayesian directly 
inverts the observed stochasticity of the data, described as a probability
measure or density, using the deterministic map $Q(\lambda)$. This
approach also begins with a prior distribution, denoted
as $\pi_{\bm\Lambda}^{\text{prior}}(\lambda)$, on the model parameters, which is
then updated to construct a posterior distribution
$\pi_{\bm\Lambda}^{\text{post}}(\lambda)$.
But its posterior distribution takes a different form:
\begin{equation}
  \label{eq:con_bay}
  \pi_{\bm\Lambda}^{\text{post}}(\lambda) =
 \pi_{\bm\Lambda}^{\text{prior}}(\lambda)\dfrac{\pi_{\mathcal{D}}^{\text{obs}}(Q(\lambda))}{\pi_{\mathcal{D}}^{Q(\text{prior})}(Q(\lambda))},
\end{equation}
where $\lambda\in\bm\Lambda$ and $\mathcal{D}$ is the space of the observed
data. Each terms in Eq.~\eqref{eq:con_bay} are explained as follows:
\begin{itemize}[itemsep=0pt, topsep=1pt]
  \item $\pi_{\mathcal{D}}^{Q(\text{prior})}$ denotes 
the push-forward of the prior through the model and represents a forward
    propagation of uncertainty. It represents how the prior knowledge of likelihoods of parameter values defines a likelihood of model outputs.
  \item $\pi_{\mathcal{D}}^{\text{obs}}$ is the observed probability
    density of the QoI. It describes the likelihood that
    the output of the model corresponds to the observed data.
\end{itemize}


\subsection{Implementation Details}
We take a noisy one-qubit gate $\Utilde$ (its noise-free version is denoted by $U$) as an example. 
Suppose we use this gate to build a testing circuit as shown in Figure~\ref{fig:testing-circ}.
We set the QoI in our case to be the probability of measuring $0$ from the
testing circuit.
Assume that the measurement operator in the testing circuit is associated with measurement errors $\ez$ and $\eo$. 
Let $\lambda := (\epsilon_g,\ez,\eo)$ be the tuple of noise parameters that we want to infer. 
Note that if $\Utilde$ is a gate like Hadamard gate, the bit-flip gate error in
theory will not affect the measurement outcome for testing circuit, which means
we only need to infer $\ez$ and $\eo$ in this case. 
In terms of measurement error rates estimation, we provide a choice of testing
circuit in Section~\ref{sec:meas-exp} consisting of a single testing circuit for
$n$ qubits, which dramatically reduces the number of testing circuits compared
with the fully correlated setting.
Let $\Lambda:=(0,1) \times (0,1) \times (0,1)$ denote the space of noise parameters and $\mathcal{D} := [0,1]$ denote the space of QoI.
Finally, we use $Q: \Lambda \rightarrow \mathcal{D}$ to denote a general function
combining Eq.~\eqref{eq:pm} and Eq.~\eqref{eq:me} that compute the probability
of measuring $\ket{0}$ when testing circuit has bit-flip gate error and measurement error.

\begin{figure}
    \centering
    \includegraphics[width=0.2\linewidth]{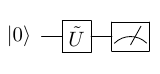}
    \caption{A testing circuit example}
    \label{fig:testing-circ}
\end{figure}

The overall algorithm consists of two parts. In the first part, $L$ number of
QoI's, denoted by $q_j$ ($j = 1,...,L$) are generated from $L$ number of prior
$\lambda$'s, denote as $\lambda_j$ for $j = 1,...,L$, with function $Q$.
Then, the distribution $\pi_{\mathcal{D}}^{Q(\text{prior})}$ is estimated by
Gaussian kernel density (KDE) using $q_j$. 
Next, in the second part, prior $\lambda_j$'s are either rejected or accepted
based on Eq.~\eqref{eq:con_bay},
and those accepted prior $\lambda_j$'s are the posterior noise parameters that
we are looking for. The distribution
$\pi^{\text{obs}}_{\mathcal{D}}$ is the observed probability of measuring $\ket{0}$, i.e.,
Gaussian KDE of data. The algorithm is summarized in 
Algorithm~\ref{alg:noise_params}, which is an implementation of Algorithms 1 and
2 in~\cite{Butler2018}.
\begin{algorithm}
  \caption{Consistent Bayesian inference for error model parameters.}
    \label{alg:noise_params}
    \begin{algorithmic}
      \STATE Given a set of prior $\lambda_j$ ($j = 1,...,L$), Gaussian KDE
      $\pi^{\text{obs}}_{\mathcal{D}}$ of the observed QoI (i.e., data), model 
      function $Q$ (i.e., combination of Eq.~\eqref{eq:pm}, 
      Eq.~\eqref{eq:me}, testing circuit and its input state);
     \FOR{$j = 1$ to $L$}
     \STATE Use $Q(\lambda_j)$ to compute $q_j$;
     \ENDFOR
     \STATE Generate Gaussian KDE $\pi_{\mathcal{D}}^{Q(\text{prior})}$ from $q_j$'s;
     \STATE Estimate $\mu:= \max_{\lambda \in \Lambda} \frac{\pi^{\text{obs}}_{\mathcal{D}}(Q(\lambda))}{\pi_{\mathcal{D}}^{Q(\text{prior})}(Q(\lambda))}$;
     \FOR{$k = 1$ to $L$}
     \STATE Generate a random number $\zeta_k \in [0,1]$ from a uniform distribution;
     \STATE Compute ratio $\eta_k := \frac{1}{\mu}\cdot \frac{\pi^{\text{obs}}_{\mathcal{D}}(Q(\lambda_k))}{\pi_{\mathcal{D}}^{Q(\text{prior})}(Q(\lambda_k))}$;
     \IF{$\eta_k > \zeta_k$}
     \STATE Accept $\lambda_k$;
     \ELSE
     \STATE Reject $\lambda_k$;
     \ENDIF
     \ENDFOR
     \STATE\textbf{output} Accepted noise parameter $\lambda_k$'s.
    \end{algorithmic}
\end{algorithm}

In practice, the prior $\lambda_j$ are randomly generated from some relatively flat normal distributions due to the little knowledge of its actual characterization. Thus, for Qubit $i$, suppose we have estimated gate and measurement error rates $(\egi^0, \ezi^0, \eoi^0)$ from past experience and their variances $(\sigma_{\egi}, \sigma_{\ezi}, \sigma_{\eoi})$ that make curves flat, the prior distributions are
\begin{align*}
     \ezi &\sim N(\ezi^0, \sigma_{\ezi}^2), \\
     \eoi &\sim N(\eoi^0, \sigma_{\eoi}^2), \\
     \egi &\sim N(\egi^0, \sigma_{\egi}^2). 
\end{align*}
In this setting, the acceptance rates of all experiments in Section~\ref{sec:exp} range from $10\%$ to $35\%$. This is high enough to select sufficient number of posterior parameters in this study.

To demonstrate and compare the difference between the results of consistent and
standard Bayesian algorithms, we also use the same priors and observation
datasets to infer noise parameters via the standard Bayesian. 
For a single Qubit $i$, 
let $(x_j, y_j)$ for $j = 1,...,J$ represent $J$ number of data pairs, where
$x_j$ is the theoretical probability of measuring $\ket{0}$ and $y_j$ is the
observed probability of measuring $\ket{0}$. As discussed in 
Eq.~\eqref{eq:error_mat0} and Eq.~\eqref{eq:gate-n1-p0}, we have 
\begin{equation}
    y_j = ((0.5 + (1-2\egi)^{m} (x_j-0.5))  (1-\ezi) + (0.5 - (1-2\egi)^{m} (x_j - 0.5)) \eoi + \varepsilon_j \label{eq:yjxj}
\end{equation}
where $m$ is the number of repetitions of the gate in the testing circuit ($m =
1$ in Figure~\ref{fig:testing-circ}) and $\varepsilon_j\sim
N(0,\sigma_{\varepsilon}^2)$ represents noise in general with standard deviation
$\sigma_{\varepsilon} \geq 0$. We use $\text{Cauchy}(0,1)$ as the prior
distribution of $\sigma_{\varepsilon}$. Eq.~\eqref{eq:yjxj} yields the following
likelihood function
\begin{displaymath}
    f(\vec{y}|\vec{x}, \ezi, \eoi, \egi, \sigma_{\varepsilon}) = \prod_{j =
    1}^{J}f_j(y_j|x_j, \ezi, \eoi, \egi, \sigma_{\varepsilon}),
\end{displaymath}
where each $f_j$ is the probability density function (PDF)
\begin{displaymath}
    N(((0.5 + (1-2\egi)^{m} (x_j-0.5))  (1-\ezi) + (0.5 - (1-2\egi)^{m} (x_j - 0.5)) \eoi, \sigma_{\varepsilon}^2).
\end{displaymath}
In this work, we use \texttt{RStan} package in \texttt{R}~\cite{RStan, R} to
implement the standard Bayesian inference.
which is summarized in Algorithm~\ref{alg:std-Bayesian}.

\begin{algorithm}
    \caption{Standard Bayesian inference for error model parameters.}
    \label{alg:std-Bayesian}
    \begin{algorithmic}
     \STATE \textbf{Data.} 
     \STATE Number of repetitions of gates in the testing circuit $m$, 
      theoretical probabilities of measuring $\ket{0}$ $x_j$, observed probabilities
      of measuring $\ket{0}$ $y_j (j = 1,..., J$), prior mean of noise parameters $(\ezi^0,\eoi^0, \egi^0)$, and prior variance of noise parameters $(\sigma_{\egi}, \sigma_{\ezi}, \sigma_{\eoi})$.
     \STATE \textbf{Model Parameters.} 
     \STATE posterior $(\ezi,\eoi,\egi) \in (0,1)^3$ and $\sigma_{\varepsilon} \geq 0$.
     \STATE \textbf{Prior Distributions.}
     \STATE $\sigma_{\varepsilon} \sim \text{Cauthy}(0,1)$; 
     \STATE $\ezi \sim N(\ezi^0, \sigma_{\ezi}^2)$; 
     \STATE $\eoi \sim N(\eoi^0, \sigma_{\eoi}^2)$; 
     \STATE $\egi \sim N(\egi^0, \sigma_{\egi}^2)$; 
     \STATE \textbf{Likelihood Function.}
     \STATE Only measurement errors:
     \STATE $\forall j$, $y_j \sim N( x_j(1-\ez) + (1-x_j) \eo, \sigma_{\varepsilon}^2)$.
     \STATE Gate and measurement errors:
    \STATE $\forall j$, $y_j \sim N(((0.5 + (1-2\eg)^{m} (x_j-0.5))  (1-\ez) + (0.5 - (1-2\eg)^{m} (x_j - 0.5)) \eo, \sigma_{\varepsilon}^2)$.
     \STATE \textbf{Stan parameters.} 
     \STATE Default No-U-Turn Sampler, 10,000 iterations, 2000 warm-up iterations, \texttt{adapt\_delta $= 0.99$}, and other parameters are default.
    \end{algorithmic}
\end{algorithm}



\section{Experiments}\label{sec:exp}

Because using our bit-flip error model only is not sufficient for the analyzing
gate errors in a complicate algorithm like Grover's search or QAOA, the
inference for gate errors is performed for a few prototype circuits. For more
sophisticated algorithms, we only investigate the measurement error. All
experiments are conducted on IBM's 5-qubit quantum computer \texttt{ibmqx2}. We
compare both the consistent Bayesian (Algorithm~\ref{alg:noise_params}) and the
standard Bayesian method (Algorithm~\ref{alg:std-Bayesian}) with the measurement
error filter in Qiskit \texttt{CompleteMeasFitter}~\cite{Qiskit} and the method
in~\cite{QDT2020} based on quantum detector tomography (QDT) to demonstrate the
efficiency of our approaches.

\subsection{Measurement Errors Filtering Experiment}\label{sec:meas-exp}

\subsubsection{Construction of Error Filter}
\begin{figure}
    \centering
    \includegraphics[width=0.2\linewidth]{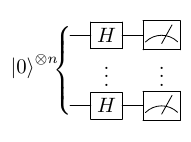}
    \caption{Testing circuit for measurement error parameter inference\label{fig:testing-circ-H}}
    \label{fig:test1}
\end{figure}
We use the circuit in Figure~\ref{fig:testing-circ-H} to infer measurement error
parameters in every single qubit, i.e., $\ezi,\eoi$ for $i \in \{1,2,3,4\}$ on \texttt{ibmqx2}.
Here, $H$ is the Hadamard gate for each qubit. Theoretically, the observed
results of $H\ket{0}$ is invariant under bit-flip and phase-flip errors.
Consequently, in this case, only the measurement error 
affects the distribution of measurement outputs, and we do not infer
gate error rate $\eg$. The testing circuit is executed for $1024 \times 128$
times, where the fraction of measuring $0$ in each ensemble consisting of $1024$
runs provides estimated probability of measuring $0$ from the testing circuit.
Thus, we have $128$ data points in total, i.e., $L=128$ in
Algorithm~\ref{alg:noise_params} or $J=128$ in Algorithm~\ref{alg:std-Bayesian}.
For qubit $i$, the prior $(\ezi,\eoi) \subseteq (0,1) \times (0,1)$ are
random number from truncated normal distribution $N(\ezi^0,0.1^2)$ and
$N(\eoi^0,0.1^2)$, respectively, where $\ezi^0$ and $\eoi^0$ are corresponding
values provided by IBM in Qiskit API \texttt{IBMQbackend.properties()} after the
daily calibration. Then, we use Algorithm~\ref{alg:noise_params} to generate the
posterior distributions. We note that in this test, the results by the
consistent Bayesian is very close to the standard Bayesian, so we present the
former only.

Figure~\ref{fig:p0p1} displays the joint and marginal distribution of posterior
distributions of error model parameters for qubits 1-4 using the consistent
Bayesian approach. 
Using these posterior distributions of error model parameters, we
can compute the posterior distribution of the QoI by substituting samples of
these distributions in the forward model $Q$.
Figure~\ref{fig:QoI} shows that these posteriors of the QoI, denoted as
$\pi_{\mathcal{D}}^{Q(\text{post})}$, approximate the distribution of the
observed data $\pi_{\mathcal{D}}^{\text{obs}}$ very well.

\begin{figure}
     \centering
     \begin{subfigure}[b]{0.45\textwidth}
         \centering
         \includegraphics[width=\textwidth]{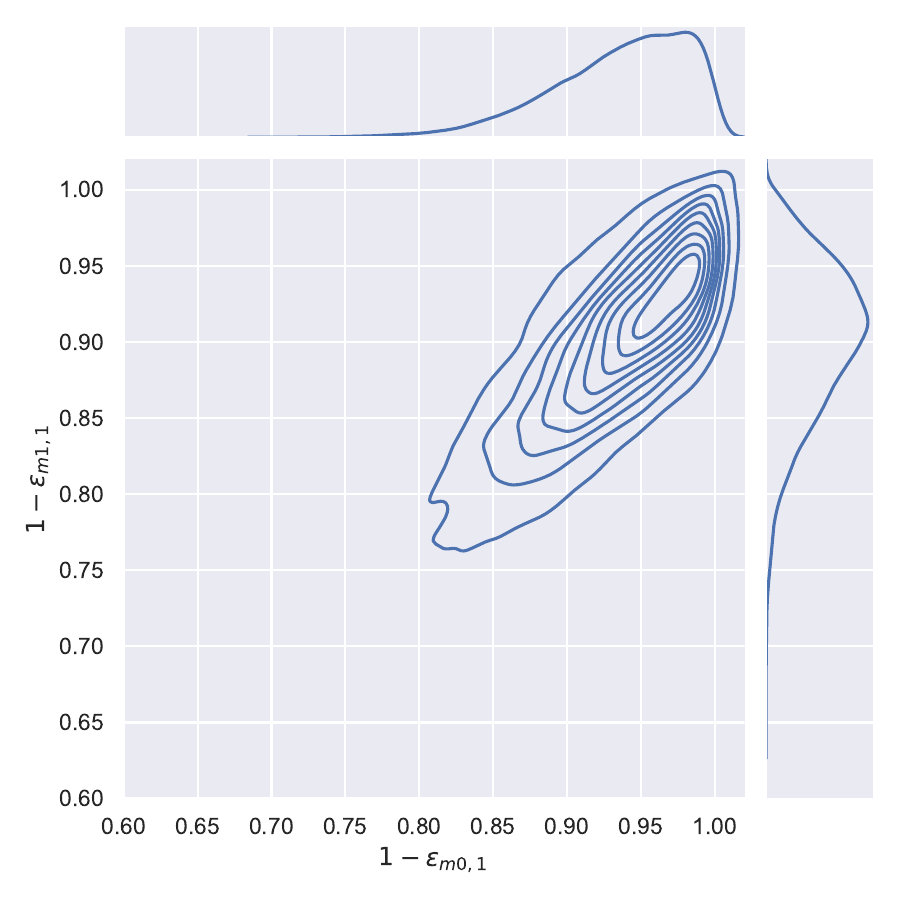}
         \caption{Qubit 1}
     \end{subfigure} \quad
     \begin{subfigure}[b]{0.45\textwidth}
         \centering
         \includegraphics[width=\textwidth]{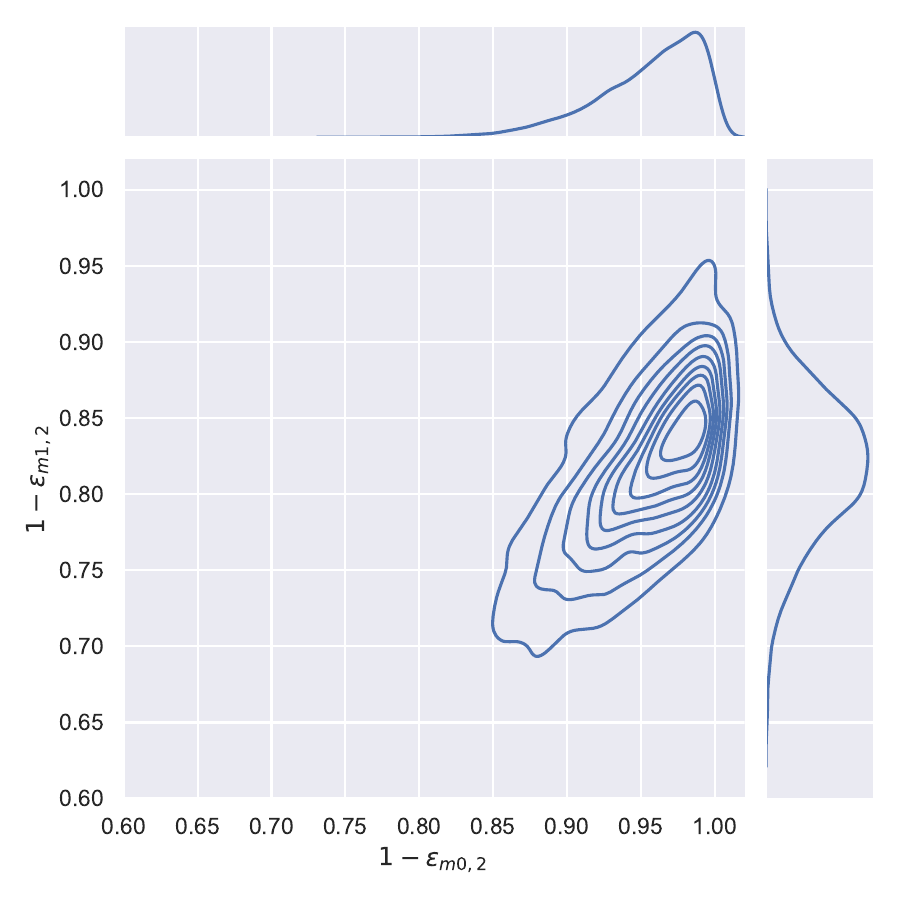}
         \caption{Qubit 2}
     \end{subfigure}
     \\
     \begin{subfigure}[b]{0.45\textwidth}
         \centering
         \includegraphics[width=\textwidth]{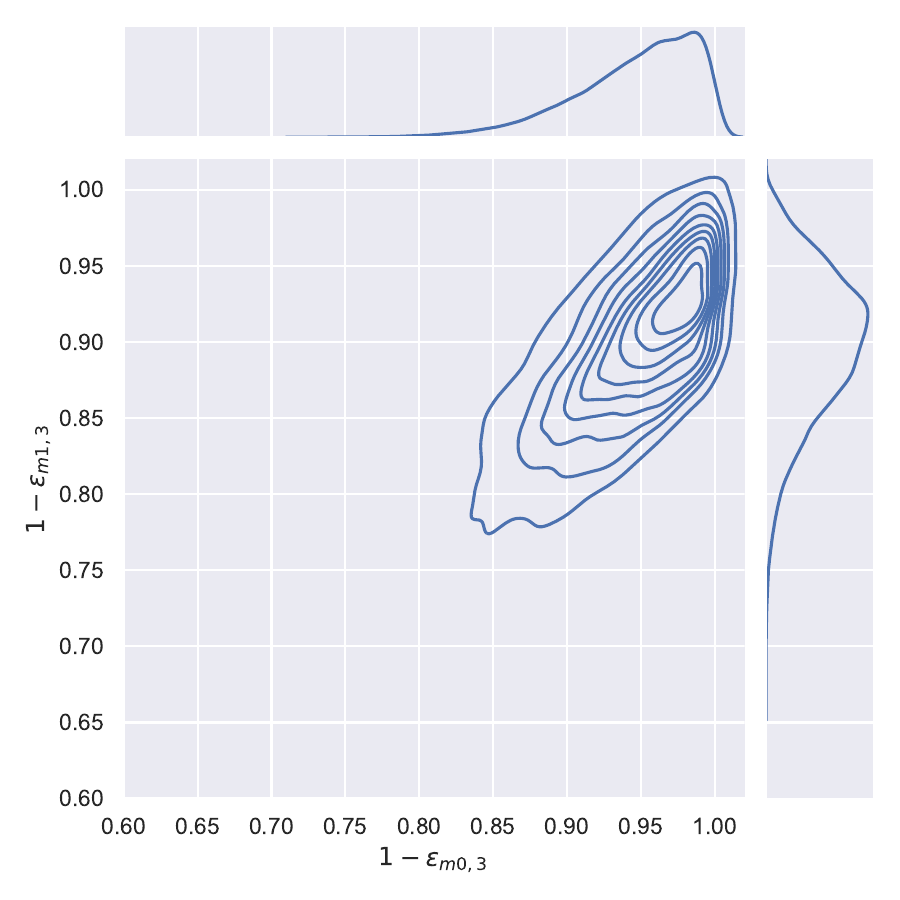}
         \caption{Qubit 3}
     \end{subfigure}\quad
     \begin{subfigure}[b]{0.45\textwidth}
         \centering
         \includegraphics[width=\textwidth]{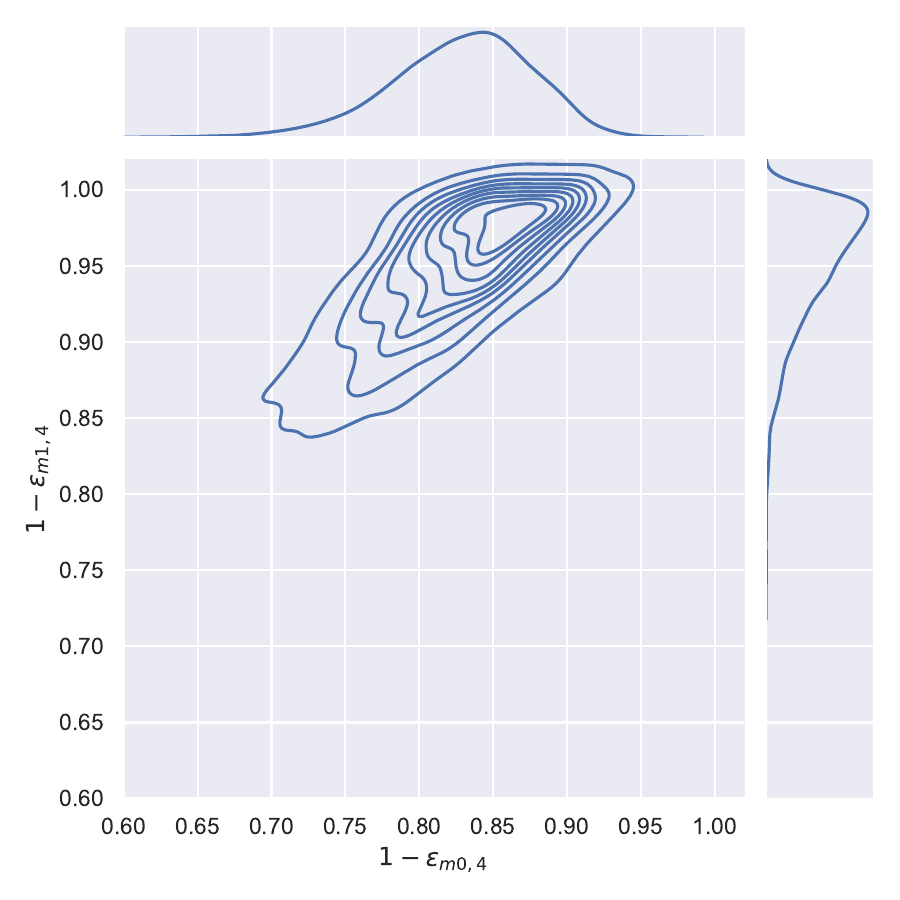}
         \caption{Qubit 4}
     \end{subfigure}
        \caption{Joint and marginal posterior distributions of measurement error parameters  in the testing circuit shown in Figure~\ref{fig:test1}. Here,
        $(1-\ezi, 1-\eoi)$ are shown for demonstration purpose.}
        \label{fig:p0p1}
\end{figure}

\begin{figure}
     \centering
     \begin{subfigure}[b]{0.49\textwidth}
         \centering
         \includegraphics[width=\textwidth]{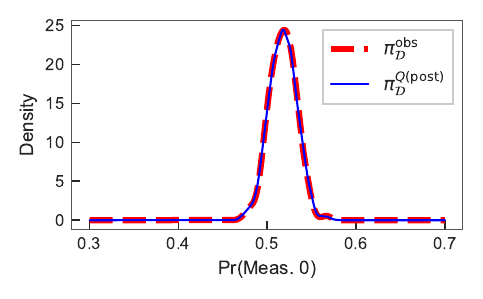}
         \caption{Qubit 1}
     \end{subfigure}
     \begin{subfigure}[b]{0.49\textwidth}
         \centering
         \includegraphics[width=\textwidth]{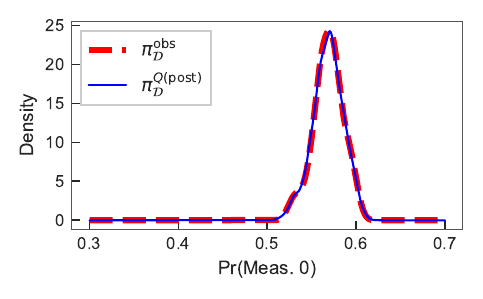}
         \caption{Qubit 2}
     \end{subfigure}
     \\
     \begin{subfigure}[b]{0.49\textwidth}
         \centering
         \includegraphics[width=\textwidth]{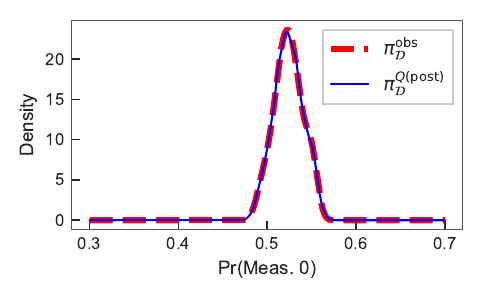}
         \caption{Qubit 3}
     \end{subfigure}
     \begin{subfigure}[b]{0.49\textwidth}
         \centering
         \includegraphics[width=\textwidth]{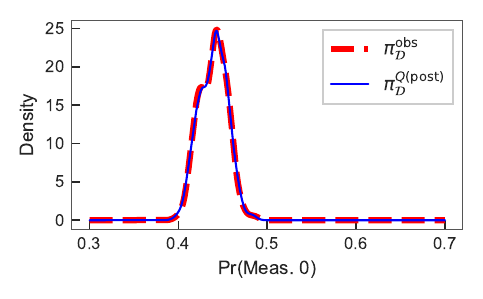}
         \caption{Qubit 4}
     \end{subfigure}
        \caption{
          PDF of the QoI (i.e., the probability of measuring $\ket{0}$) obtained from data
          ($\pi_{\mathcal{D}}^{\text{obs}}$)
          and from evaluating $Q$ using inferred measurement error parameters
          ($\pi_{\mathcal{D}}^{Q(\text{post})}$).}
        \label{fig:QoI}
\end{figure}

For a more quantitative comparison, we list the posterior mean and the maximum 
\emph{a posteriori} probability (MAP) in
Table~\ref{tab:QoI}. We can see, in general, $\eoi$ is higher than $\ezi$, which
is consistent with the description in~\cite{Hicks2021}.
Also, Tables~\ref{tab:QoI} presents the Kullback–Leibler (KL) divergence 
between PDFs of the observed data $\pi_{\mathcal{D}}^{\text{obs}}$ and the
posterior distribution of the QoI $\pi_{\mathcal{D}}^{Q(\text{post})}$ 
for each qubit in Figure~\ref{fig:QoI}, which illustrates the accuracy of 
our error model.
\begin{table}
    \centering
    \caption{Outcomes from the consistent Bayesian inference}
    \label{tab:QoI}
    \begin{tabular}{lcccc} \toprule
         &Qubit 1 &Qubit 2 &Qubit 3 &Qubit 4\\ \midrule
    KL-div($\pi_\mathcal{D}^{Q(\text{posterior})},\pi_\mathcal{D}^{\text{obs}}$) &$0.001014$ &$0.002243$ &$0.000777$ &$0.001610$ \\ 
    Post. Mean $(1-\ezi,1-\eoi)$ &(0.9354,0.9009) &(0.9537,0.8184) &(0.9457,0.8976) &(0.8272,0.9492) \\
    Post. MAP $(1-\ezi,1-\eoi)$ &(0.9797,0.9128) &(0.9863,0.8243) &(0.9858,0.9180) &(0.8426,0.9846) \\ \bottomrule
    \end{tabular}
\end{table}


In this test, our prior distribution $N(\ezi^0,0.1^2)$ and $N(\eoi^0,0.1^2)$ 
are quite flat and not informative. This is because the vendor-provided 
$\ezi^0$ and $\eoi^0$ are not always good estimations. This can be verified by
the error mitigation results.
When we use relation Eq.~\eqref{eq:me} and Eq.~\eqref{eq:me-filter} to construct
measurement error filters using the vendor-provided $(\ezi^0,\eoi^0)$ and our
posteriors, then apply those filters on the $128$ outputs of circuit in
Figure~\ref{fig:testing-circ-H} (i.e., $128$ observed probability of
measuring $\ket{0}$), we obtain different results as shown in
Figure~\ref{fig:DQoI}. The theoretical probability of measuring $0$ for circuit
in Figure~\ref{fig:testing-circ-H} is $0.5$, but the provided parameters
rarely gives this value, and its mean and peak of Gaussian KDE are not even
close to $0.5$. On the other hand, the filters created by our posteriors can
make sure the mean and peak of the denoised probability of measuring
$\ket{0}$ are around the ideal value $0.5$. The results in 
Figure~\ref{fig:testing-circ-H} indicate that when
applying Eq.~\eqref{eq:me-filter} to mitigate the measurement error, one has a
larger chance to obtain a denoised QoI close to the ideal value $0.5$ by using
the parameters inferred by our method. More importantly, the result by our
method is unbiased as the mean value of the denoised QoI is $0.5$.
\begin{figure}
     \centering
     \begin{subfigure}[b]{0.49\textwidth}
         \centering
         \includegraphics[width=\textwidth]{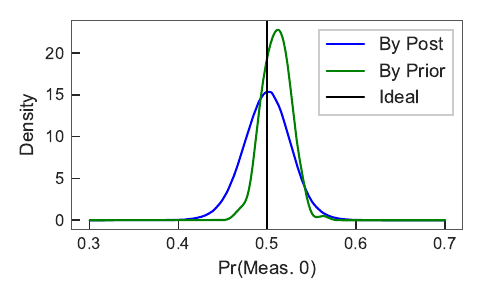}
         \caption{Qubit 1}
     \end{subfigure}
     \begin{subfigure}[b]{0.49\textwidth}
         \centering
         \includegraphics[width=\textwidth]{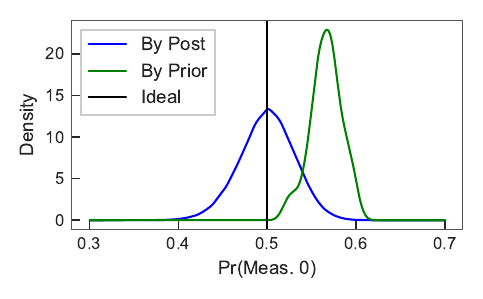}
         \caption{Qubit 2}
     \end{subfigure}
     \\
     \begin{subfigure}[b]{0.49\textwidth}
         \centering
         \includegraphics[width=\textwidth]{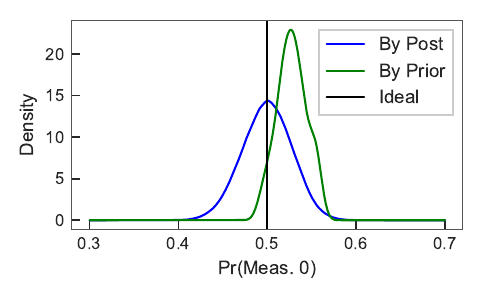}
         \caption{Qubit 3}
     \end{subfigure}
     \begin{subfigure}[b]{0.49\textwidth}
         \centering
         \includegraphics[width=\textwidth]{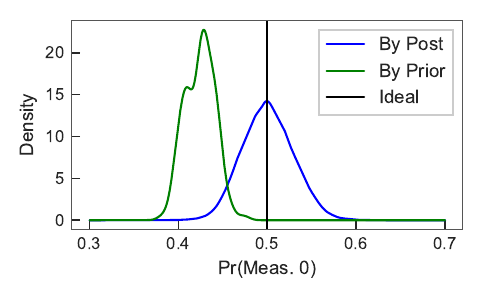}
         \caption{Qubit 4}
     \end{subfigure}
        \caption{
          PDFs of the probability of measuring $\ket{0}$ 
        denoised by vendor-provided parameters (priors) and by posteriors}
        \label{fig:DQoI}
\end{figure}

This test indicates that we can use the circuit shown in Figure~\ref{fig:test1}
to estimate measurement error in multiple qubits at the same time. It only
requires to prepare initial state using ground states, and the total number of
gates are linearly dependent on the number of qubits.

\subsubsection{Application on State Tomography}

After obtaining the error model parameters, we can further use this model
mitigate the measurement errors in other circuits.
We first apply error filters to the results of state tomography on circuits that
make bell basis from $\ket{00}$ and $\ket{000}$. Qubit 0 and 1 are used for
2-qubit state and Qubit 0 to 2 are used for 3-qubit state in \texttt{ibmqx2}.
The fidelity between density matrices from (corrected) state tomography result
and theoretical quantum state is listed in Table~\ref{tab:tomo}. For the 2-qubit
state tomography, filters constructed from posterior means by the consistent
Bayesian and the standard Bayesian provides similar fidelity as that by the
Qiskit filter. However, for 3-qubit tomography, filters from both
Bayesian methods yield better fidelity, and their performance are
similar. We note that the Qiskit filter assumes correlation in the 
measurements, which requires more model parameters, while our
model does not. The fidelity in Table~\ref{tab:tomo} indicates that Bayesian
methods enable us to use fewer parameters to obtain better results.
\begin{table}
    \caption{Fidelity of state tomography results filtered by various error filters}
    \label{tab:tomo}
    \begin{minipage}{\columnwidth}
    \begin{center}
    \begin{tabular}{lcccc} \toprule
    \multirow{2}{*}{State}                   & \multicolumn{4}{c}{Fidelity} \footnote{``Qiskit Method'' means to \texttt{CompleteMeasFitter} in Qiskit~\cite{Qiskit}. ``Cons. mean'' implies the transition matrix is created from posterior mean by Algorithm~\ref{alg:noise_params}. ``Stand. Mean'' means the transition matrix is created from posterior mean by standard Bayesian}                      \\ \cline{2-5} 
                                                 & Raw Data &By Qiskit Method &By Cons. Mean &By Stand. Mean \\ \midrule
    $\frac{1}{\sqrt{2}}(\ket{00} + \ket{11})$    & 0.9051   & 0.9800          & 0.9781     & 0.9783        \\ 
    $\frac{1}{\sqrt{2}}(\ket{01} + \ket{10})$    & 0.9157   & 0.9803          & 0.9806     & 0.9808        \\ 
    $\frac{1}{\sqrt{2}}(\ket{000} + \ket{111})$  & 0.7389   & 0.9227          & 0.9390     & 0.9391        \\ 
    $\frac{1}{\sqrt{2}}(\ket{010} + \ket{101})$  & 0.6719   & 0.8970          & 0.9203     & 0.9207        \\ 
    $\frac{1}{\sqrt{2}}(\ket{100} + \ket{011})$  & 0.7006   & 0.9121          & 0.9254     & 0.9207        \\ 
    $\frac{1}{\sqrt{2}}(\ket{110} + \ket{001})$  & 0.6974   & 0.8863          & 0.9443     & 0.9446        \\ \bottomrule
    \end{tabular}
    \end{center}
    \end{minipage}
\end{table}

\subsubsection{Application on Grover's Search and QAOA}

Next, we apply our filter on Grover's search and QAOA circuits from~\cite{algs}.
We measure Qubit $1$ and $2$ in \texttt{ibmqx2} for Grover's search circuit. The
exact solution of this Grover's search example is $\ket{11}$ and the theoretical
probability is $1$. Thus, in this case, we compare the probability of measuring
$\ket{11}$ by running in the real device \texttt{ibmqx2} and the denoised
probabilities from error filters based on Qiskit method
\texttt{CompleteMeasFitter}, QDT in~\cite{QDT2020}, mean and MAP of posteriors
from standard Bayesian, and mean and MAP of posteriors from the
consistent Bayesian. All circuits for the Qiskit filter and QDT filters are
executed for $8192$ shots, and each probability used in both filters is 
estimated from $8192$ measurement outcomes.

In addition, as we do not expect the quantum computer has a stable environment,
in order to see the robustness of each method in comparison, after the data for
creating error filters are collected, we run our Grover's search circuit at
several different time and then apply the same set of filters. All results are
listed in Table~\ref{tab:grover}.

\begin{table}
    \caption{Probability of measuring $\ket{11}$ in Grover's search example}
    \label{tab:grover}
    \begin{minipage}{\columnwidth}
    \begin{center}
    \begin{tabular}{lcccccc}\toprule
    Method/Source \footnote{``Qiskit Method'' means to \texttt{CompleteMeasFitter} in Qiskit~\cite{Qiskit}, QDT refers to filter in~\cite{QDT2020}, ``Stand.'' stands for Standard Bayesian, and ``Cons.'' refers to Algorithm~\ref{alg:noise_params}. MAP and mean represent the error filters are created from the MAP and mean of posteriors.}      & Hour 0 \footnote{``Hour X'' means the experiment is conducted X hours after the data for error filers of all listed methods are collected} & Hour 2 & Hour 4 & Hour 8 & Hour 12& Hour 16\\\midrule
    Raw Data     & 0.6727        & 0.6930 & 0.6724 & 0.6740 & 0.6917 & 0.6841 \\
    Qiskit Method   & 0.7097        & 0.7335 & 0.7104 & 0.7120 & 0.7323 & 0.7241 \\
    QDT             & 0.7107        & 0.7332 & 0.7087 & 0.7108 & 0.7305 & 0.7224 \\ 
    Stand. Mean         & 0.9099        & 0.9324 & 0.9063 & 0.9088 & 0.9290 & 0.9192 \\
    Stand. MAP        & 0.8378        & 0.8635 & 0.8372 & 0.8392 & 0.8616 & 0.8522 \\ 
    Cons. Mean        & 0.9128        & 0.9351 & 0.9088 & 0.9114 & 0.9316 & 0.9219 \\
    Cons. MAP       & 0.8920        & 0.9158 & 0.8914 & 0.8936 & 0.9128 & 0.9034 \\\bottomrule
    \end{tabular}
    \end{center}
    \end{minipage}
\end{table}

As shown in Table~\ref{tab:grover}, both Bayesian methods yield best performance
among all the methods while the filters constructed from posterior mean are
better than the filters constructed from posterior MAP. In all six time slots,
the mean and MAP from the consistent Bayesian provide sightly better denoising
effect than those from the standard Bayesian. 

The QAOA example includes two rounds and parameters for QAOA circuits are
set as $(\gamma_1, \beta_1) = (0.2\pi, 0.15\pi)$ and $(\gamma_2,
\beta_2) = (0.4\pi, 0.05\pi)$~\cite{algs}.  The graph of the QAOA example
in~\cite{algs} is shown in Figure~\ref{fig:QAOA-grah}, which has maximum
objective value $3$ in
Max-Cut problem and 6 bit-string optimal solution $\ket{0010}$, $\ket{0101}$,
$\ket{0110}$, $\ket{1001}$, $\ket{1010}$, $\ket{1101}$ (\texttt{ibmqx2} uses
little-endien convention, so the rightmost bit is Node 1 and the leftmost bit is
Node 4). Because the graph in Figure~\ref{fig:QAOA-grah} is a subgraph of the
coupling map of  \texttt{ibmqx2}, we map the nodes to qubits exactly.

\begin{figure}
\centering

\includegraphics[width=0.2\linewidth]{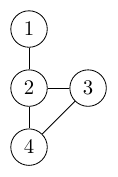}
\caption{The graph of QAOA example in~\cite{algs}} \label{fig:QAOA-grah}
\end{figure}

The average size of a cut and the probability of measuring an optimal solution are two quantities to compare in
this experiment. Moreover, the results from simulator is also provided as a
indicator for the situation without noise. The remaining procedures are the same
as those in Grover's search experiment. The data is reported in
Table~\ref{tab:QAOAM2} and~\ref{tab:QAOAProb}. Here, ``QDT'' error filter
from~\cite{QDT2020} is built under the assumption that measurement operations 
are independent between each qubit due to the large amount of testing circuits
that correlation assumption requires ($6^4$ circuits if assume qubits are
correlated in measurement).

\begin{table}
    \caption{Average size of a sampled cut in QAOA example}
    \label{tab:QAOAM2}
    \begin{center}
    \begin{tabular}{lcccccc} \toprule
    Method/Source & Hour 0 & Hour 2 & Hour 4 & Hour 8 & Hour 12 & Hour 16 \\ \midrule
    Simulator     & 2.8637 & 2.8642 & 2.8651 & 2.8652 & 2.8642  & 2.8626  \\ 
    Raw Data   & 2.3005 & 2.3579 & 2.3197 & 2.2823 & 2.3063  & 2.2871  \\ 
    Qiskit Method & 2.3783 & 2.4623 & 2.4247 & 2.3786 & 2.4063  & 2.3926  \\
    QDT           & 2.3812 & 2.4453 & 2.4016 & 2.3589 & 2.3851  & 2.3612  \\ 
    Stand. Mean       & 2.4878 & 2.5483 & 2.5059 & 2.4551 & 2.4860  & 2.4581  \\ 
    Stand. MAP       & 2.4080 & 2.4708 & 2.4222 & 2.3800 & 2.4059  & 2.3829  \\
    Cons. Mean      & 2.4911 & 2.5518 & 2.5089 & 2.4578 & 2.4891  & 2.4612  \\ 
    Cons. MAP      & 2.4407 & 2.4996 & 2.4554 & 2.4109 & 2.4382  & 2.4133  \\ \bottomrule
    \end{tabular}
    \end{center}
\end{table}

\begin{table}
    \caption{Probability of measuring an optimal solution in QAOA example}
    \label{tab:QAOAProb}
    \begin{center}
    \begin{tabular}{lcccccc} \toprule 
    Method/Source & Hour 0 & Hour 2 & Hour 4 & Hour 8 & Hour 12 & Hour 16 \\ \midrule
    Simulator     & 0.8930 & 0.8937 & 0.8941 & 0.8943 & 0.8940  & 0.8940  \\ 
    Raw Data   & 0.5784 & 0.6038 & 0.5895 & 0.5725 & 0.5748  & 0.5740  \\
    Qiskit Method & 0.5968 & 0.6456 & 0.6316 & 0.6074 & 0.6140  & 0.6155  \\ 
    QDT           & 0.6400 & 0.6698 & 0.6525 & 0.6312 & 0.6331  & 0.6325  \\ 
    Stand. Mean       & 0.6952 & 0.7239 & 0.7033 & 0.6766 & 0.6787  & 0.6797  \\ 
    Stand. MAP       & 0.6444 & 0.6695 & 0.6508 & 0.6305 & 0.6309  & 0.6317  \\
    Cons. Mean      & 0.6975 & 0.7265 & 0.7058 & 0.6790 & 0.6810  & 0.6822  \\ 
    Cons. MAP      & 0.6610 & 0.6860 & 0.6672 & 0.6431 & 0.6439  & 0.6452  \\ \bottomrule
    \end{tabular}
    \end{center}
\end{table}

The conclusion from Table~\ref{tab:QAOAM2} and~\ref{tab:QAOAProb} is basically
the same as that from Table~\ref{tab:grover}. Namely, Bayesian methods, especially
filters from posterior mean, outperform other methods, and parameters inferred
by the consistent Bayesian works slightly better than those by the standard
Bayesian in all six time slots. From both Grover's search and QAOA examples, we
can see the accuracy of both Bayesian approaches are better than the existing
methods.


\subsubsection{Application on Random Clifford Circuits}

Finally, we test the measurement-error filtering for random 2-Qubit Clifford
circuits with 1, 2, 3, and 4 2-Qubit Clifford operators (i.e., length 1, 2, 3,
4). For each length, $16$ random circuits are generated to draw a boxplot and each
circuit is run for $8192$ shots. The results are shown in
Figure~\ref{fig:clifford}. While the theoretical output of 2-Qubit Clifford
circuit is $\ket{00}$ with probability 1, Figure~\ref{fig:clifford} demonstrate that the
filter constructed from posterior mean estimated by standard Bayesian provides
best performance. The consistent Bayesian results in almost the same
results as the standard Bayesian method.

\begin{figure}
     \centering
     \begin{subfigure}[b]{0.49\textwidth}
         \centering
         \includegraphics[width=\textwidth]{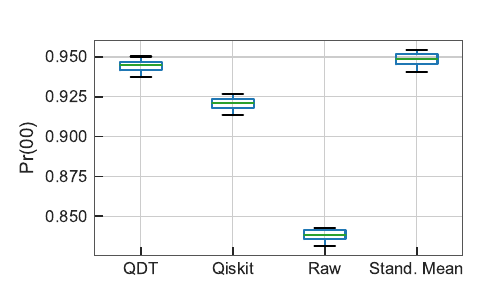}
         \caption{Length 1}
     \end{subfigure}
     \begin{subfigure}[b]{0.49\textwidth}
         \centering
         \includegraphics[width=\textwidth]{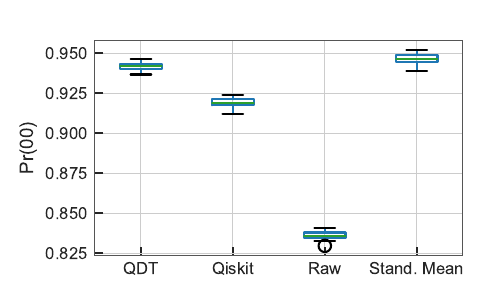}
         \caption{Length 2}
     \end{subfigure}\\
     \begin{subfigure}[b]{0.49\textwidth}
         \centering
         \includegraphics[width=\textwidth]{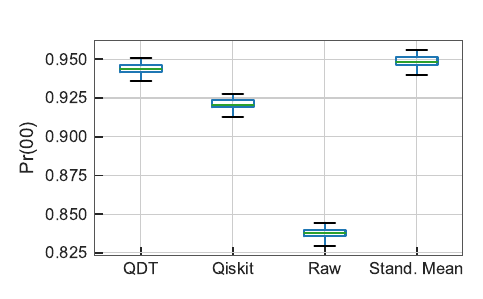}
         \caption{Length 3}
     \end{subfigure}
     \begin{subfigure}[b]{0.49\textwidth}
         \centering
         \includegraphics[width=\textwidth]{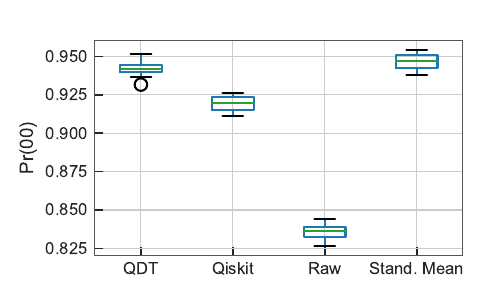}
         \caption{Length 4}
     \end{subfigure}
        \caption{Measurement-error filtering for random 2-Qubit Clifford
        circuits. ``Length'' represents the number of Clifford operators in the
        circuit. ``Probability'' means the probability of measuring $\ket{00}$.}
        \label{fig:clifford}
\end{figure}

\subsection{Gate and Measurement Error Filtering Experiment}\label{sec:gate-exp}

We consider the circuit with $200$ NOT gates as shown in Figure~\ref{fig:200X}.
We still use machine \texttt{ibmqx2} and run the experiment twice separately on
Qubit 1 and Qubit 2. In each trial, the circuit is executed $1024 \times 128$
times where readouts from every $1024$ runs are used to estimate the
QoI, i.e., the probability of measuring $\ket{0}$. Namely, we collect $128$
samples of the QoI.
\begin{figure}
    \centering
    \includegraphics[width=0.2\linewidth]{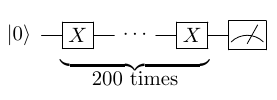}
    \caption{Experiment circuit for gate and measurement error
    mitigation.}
    \label{fig:200X}
\end{figure}
Because the aforementioned Qiskit method \texttt{CompleteMeasFitter} and QDT are
for measurement errors, in this section, we only compare the results from
standard Bayesian and the consistent Bayesian with the same priors and dataset. 
The priors are truncated normal $N(\ezi^0,0.1^2)$, $N(\eoi^0,0.1^2)$, and
$N(\egi^0,0.005^2)$ with range $(0,1)$. Again, $\ezi^0, \eoi^0, \egi^0$ are
vendor-provided values from IBM's daily calibration and standard deviations are
chosen to make prior relatively flat due to the lack of knowledge on these
parameters. 

\subsubsection{Inference for Noise Parameters}

Figure~\ref{fig:eg-gate} shows the distribution of $\eg$ in Qubit 1 and 2. Both
distributions are right-skewed. Table~\ref{tab:stats-gate} provides the
numerical values for mean and MAP. In Table~\ref{tab:stats-gate}, we can see
both methods give similar measurement error parameter $\ezi$ and $\eoi$ on Qubit
1 and 2, but the gate error rate $\eg$ are not always similar. More importantly,
as shown in Figure~\ref{fig:sbob-gate}, posteriors of the QoI from the 
consistent Bayesianp, i.e., $\pi_{\mathcal{D}}^{Q(\text{post})}$
matches the distribution of data, i.e., $\pi_{\mathcal{D}}^{\text{obs}}$ quite
well. On the other hand, the posterior distribution of the QoI generated by
posteriors distributions of model parameters from the standard Bayesian can 
match the empirial mean of the data only while the shape of the PDF is quite
different. 


\begin{figure}
     \centering
     \begin{subfigure}[b]{0.49\textwidth}
         \centering
         \includegraphics[width=\textwidth]{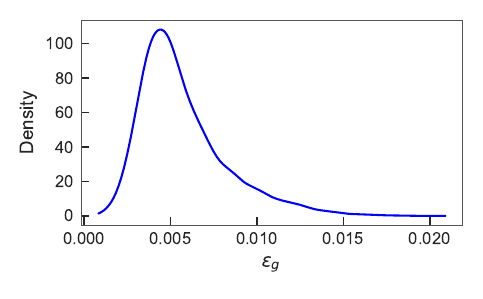}
         \caption{Qubit 1}
     \end{subfigure}
     \begin{subfigure}[b]{0.49\textwidth}
         \centering
         \includegraphics[width=\textwidth]{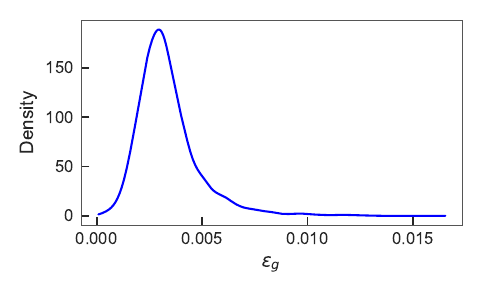}
         \caption{Qubit 2}
     \end{subfigure}
        \caption{Gaussian KDE of $\eg$}
        \label{fig:eg-gate}
\end{figure}

\begin{table} 
    \caption{Measurement error parameters of the consistent and standard
    Bayesian inference.}
    \label{tab:stats-gate}
    \begin{center}
    \begin{tabular}{rcccc} \toprule
         &Qubit 1 &Qubit 2\\ \midrule
    Consistent Post. Mean $(1-\ezi,1-\eoi, \eg)$ &(0.9255, 0.8922, 0.004934) &(0.9229, 0.8856, 0.003804)  \\ 
    Consistent Post. MAP $(1-\ezi,1-\eoi, \eg)$ &(0.9756, 0.8837, 0.004827) &(0.9770, 0.9485, 0.003291)  \\ 
    Standard Post. Mean $(1-\ezi,1-\eoi, \eg)$  &(0.9221, 0.8939, 0.004683) &(0.9214, 0.8871, 0.002982)  \\ 
    Standard Post. MAP $(1-\ezi,1-\eoi, \eg)$  &(0.9758, 0.8835, 0.006550) &(0.9836, 0.9354, 0.003453)  \\ \bottomrule
    \end{tabular}
    \end{center}
\end{table}

\begin{figure}
     \centering
     \begin{subfigure}[b]{0.49\textwidth}
         \centering
         \includegraphics[width=\textwidth]{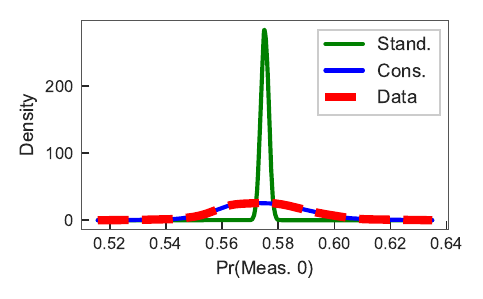}
         \caption{Qubit 1}
     \end{subfigure}
     \begin{subfigure}[b]{0.49\textwidth}
         \centering
         \includegraphics[width=\textwidth]{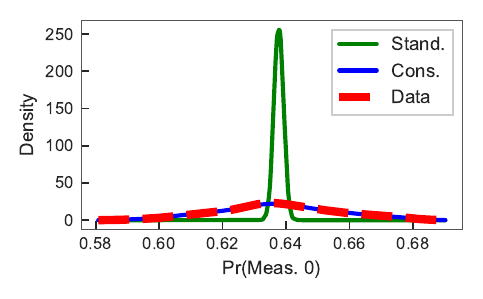}
         \caption{Qubit 2}
     \end{subfigure}
        \caption{Posterior distribution of the QoI, i.e.,
        $\pi_{\mathcal{D}}^{Q(\text{post})}$, generated by the posterior
        distribution of model parameters from both Bayesian methods}
        \label{fig:sbob-gate}
\end{figure}

\subsubsection{Error Filtering}
Using the posterior means from Table~\ref{tab:stats-gate}, we construct gate and
measurement error filters and apply them on the 128 samples of the QoI.
(i.e., probabilities of measuring 0) on Qubit 1 and Qubit 2. The results are
displayed in Figure~\ref{fig:gate-denoised}. 
It shows that both Bayesian approaches we use can recover the exact value $1$
with high probability. More importantly, the consistent Bayesian outperforms the
standard one as it recovers the exact value $1$ with larger chance.
especially in the test on Qubit 2.

\begin{figure}
    \centering
     \begin{subfigure}[b]{0.49\textwidth}
         \centering
         \includegraphics[width=\textwidth]{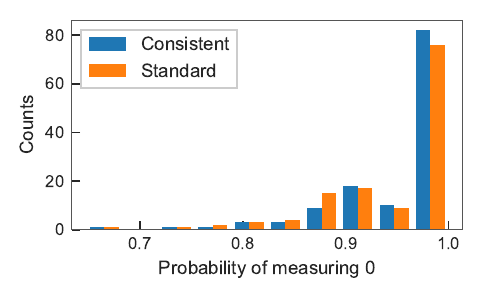}
         \caption{Qubit 1}
     \end{subfigure}
     \begin{subfigure}[b]{0.49\textwidth}
         \centering
         \includegraphics[width=\textwidth]{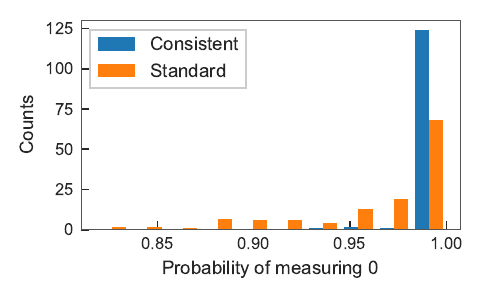}
         \caption{Qubit 2}
     \end{subfigure}
    \caption{Denoised (both gate and measurement) probability of measuring 0. Parameters used are posterior mean from Table~\ref{tab:stats-gate}.}
    \label{fig:gate-denoised}
\end{figure}

The data for error filters in Section~\ref{sec:meas-exp} and experiment in
Section~\ref{sec:gate-exp} were collected within one hour, so it is reasonable
to use posteriors in Section~\ref{sec:gate-exp} to denoise the Grover's search
data in Section~\ref{sec:meas-exp}. However, comparing the values of measurement
error parameters in Table~\ref{tab:QoI} and in Table~\ref{tab:stats-gate}, we can see there are some noticeable differences. 
Table~\ref{tab:mix} provides the results of using parameters in
Section~\ref{sec:gate-exp} to filter out errors in data used in 
Section~\ref{sec:meas-exp}. 
We can see they are better than Qiskit method and QDT, but worse than values 
from either Bayesian methods shown in Tabel~\ref{tab:grover}.

\begin{table}
    \caption{Probability of measuring $\ket{11}$ in Grover's search Example denoised by parameters in Section~\ref{sec:gate-exp}}
    \label{tab:mix}
    \begin{center}
    \begin{tabular}{lcccccc} \toprule
    Method   & Hour 0 & Hour 2 & Hour 4 & Hour 8 & Hour 12 & Hour 16 \\ \midrule
    Stand. Mean  & 0.8398 & 0.8680 & 0.8392 & 0.8414 & 0.8656  & 0.8561  \\ 
    Stand. MAP  & 0.8116 & 0.8367 & 0.8110 & 0.8131 & 0.8348  & 0.8257  \\ 
    Cons. Mean & 0.8434 & 0.8716 & 0.8428 & 0.8450 & 0.8691  & 0.8595  \\
    Cons. MAP & 0.7992 & 0.8240 & 0.7986 & 0.8005 & 0.8221  & 0.8131  \\ \bottomrule
    \end{tabular}
    \end{center}
\end{table}

One possible explanation is, with $200$ gates, our model is much more sensitive
to the gate error than the measurement error. A comparison of the sensitivity is
shown in Figure~\ref{fig:sens}, where the results are obtained by using the
error models Eqs.~\eqref{eq:me} and \eqref{eq:pm}. In 
Figure~\ref{fig:sens} (a) and (b), we can see
that when $\epsilon_g$ is fixed, the QoI changes linearly and slowly as $\ezi$
or $\eoi$ varies. However, as shown in Figure~\ref{fig:sens} (c) when $\ezi$ and
$\eoi$ are fixed, the QoI changes rapidly as $\epsilon_g$ increases.
the estimation of measurement error in Section~\ref{sec:meas-exp} uses circuits
that have 0.5 chance to measure either $\ket{0}$ or $\ket{1}$ without noise and
this distribution does not change when a Hadamard gate suffers from bit-flip or phase-flip error,
so it yields a better performance. 

\begin{figure}
     \centering
     \begin{subfigure}[b]{0.32\textwidth}
         \centering
         \includegraphics[width=\textwidth]{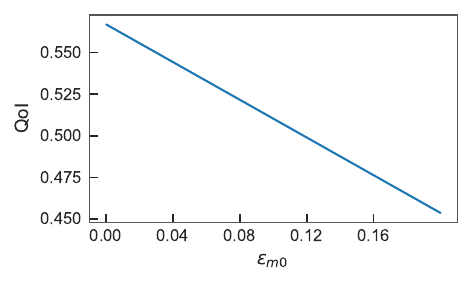}
        \caption{$\eo = 0, \eg = 0.005$}
     \end{subfigure}
     \begin{subfigure}[b]{0.32\textwidth}
         \centering
         \includegraphics[width=\textwidth]{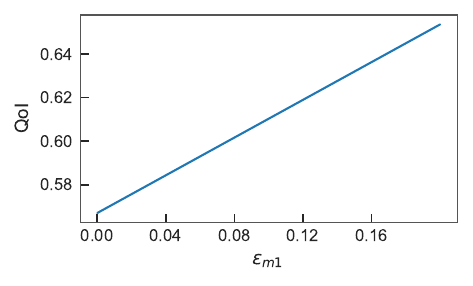}
        \caption{$\eo = 0, \eg = 0.005$}
     \end{subfigure}
     \begin{subfigure}[b]{0.32\textwidth}
         \centering
         \includegraphics[width=\textwidth]{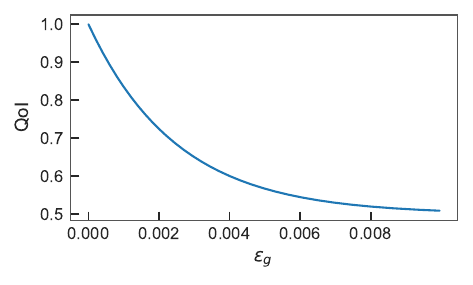}
         \caption{$\eo = 0, \ez = 0$}
     \end{subfigure}
        \caption{Sensitivity analysis of forward model $Q$ with 200 NOT gate.}
        \label{fig:sens}
\end{figure}


\section{Discussion and Future Works}

In this work, we extend a bit-flip error model from a single gate case to
multiple gate case, and provide theoretical analysis to prove the existence of
the error mitigation solution for both cases.
In some noise models, such as depolarizing error model, the rate of bit-flip error is associated with the rates of other types of errors~\cite[p. 379]{NC2010}. Thus, the inference of bit-flip error rates could provide a connection to a more general noise model.
We propose to use Bayesian approaches to infer parameters in the error models to characterize the propagation of the device noise in QC algorithms more
effectively. The experiments in Section~\ref{sec:exp} demonstrate that our methodology
outperforms two existing methods on the same error models over a wide range of time, while the number of testing circuits is
linear or constant to the number of qubits. 
The consistent Bayesian approach is, in general, better than the standard Bayesian.
These results indicate that our error models can characterize the device noise quite well, and they help to
understand the propagation of such noise in QC algorithms. 

There are still several limitations in our methodology. One issue that affects
the scalability of our method is the exponentially large matrix in the denoising
step. The dimension of matrices can be reduced if we can identify the qubits that
are independent during the measurement step of an algorithm and filter their
measurement outcome separately. 
A recent work in~\cite{space_red_meas} also indicates a scheme to reduce the dimension of the transition matrix by limiting the range of bases that are put into consideration.
Also, a parallel algorithm proposed in~\cite{tensorform} can exploit the tensor-product structure of the linear system in the error filtering step to speedup the calculation.
On the other hand, 
because the method of estimating the distribution of model parameters is not
limited by the two models we discussed in the paper, a consideration for pairwise-correlated measurement error
model discussed in~\cite{Bravyi2020} probably be helpful for inferring 
correlated measurement error rates.
A potential extension to the applicable gates is to modify the model to accommodate multi-qubit bit-flip error instead of individual-qubit error. This is because more gates commute with $X^{\otimes n}$ than with elements in $\{X,I\}^{\otimes n}$ for $n \geq 2$. For example, $X\otimes X$ commute with matrix $A \otimes B$ and $e^{-i \delta A\otimes B}$ for $(A,B) \in \{Y,Z\}^{\otimes 2}\cup \{I,X\}^{\otimes 2}$ and arbitrary $\delta$, where the form  $e^{-i \delta A\otimes B}$ is generally utilized in quantum simulation~\cite{Tacchino2019}.


\section*{Acknowledgement}



This works was partially supported by Defense Advanced Research Projects Agency under Grant No.:W911NF2010022. Xiu Yang was also supported by NSF CAREER DMS-2143915. Muqing Zheng was also supported by U.S. Department of Energy, Office of Science, National Quantum Information Science Research Centers, Quantum Science Center (QSC). Ang Li's work was supported by the U.S. Department of Energy, Office of Science, National Quantum Information Science Research Centers, Co-design Center for Quantum Advantage (C$^2$QA) under contract number DE-SC0012704. The Pacific Northwest National Laboratory is operated by the U.S. Department of Energy under Contract DE-AC05-76RL01830.
This research used resources of the Oak Ridge Leadership Computing
Facility, which is a DOE Office of Science User Facility supported under
Contract DE-AC05-00OR22725.
This research also used resources of the National Energy Research Scientific Computing Center (NERSC), a U.S. Department of Energy Office of Science User Facility located at Lawrence Berkeley National Laboratory, operated under Contract No. DE-AC02-05CH11231 using NERSC award ERCAP0022228.


\bibliographystyle{unsrt}
\bibliography{refs}

\begin{thebibliography}{10}

\bibitem{Preskill2018}
John Preskill.
\newblock Quantum computing in the {NISQ} era and beyond.
\newblock {\em Quantum}, 2:79, August 2018.

\bibitem{Raussendorf2007}
Robert Raussendorf and Jim Harrington.
\newblock Fault-tolerant quantum computation with high threshold in two
  dimensions.
\newblock {\em Physical Review Letters}, 98(19), May 2007.

\bibitem{colorABC}
Aleksander~Marek Kubica.
\newblock {\em The ABCs of the Color Code: A Study of Topological Quantum Codes
  as Toy Models for Fault-Tolerant Quantum Computation and Quantum Phases Of
  Matter}.
\newblock PhD thesis, California Institute of Technology, 2018.

\bibitem{Aharonov2008}
Dorit Aharonov and Michael Ben-Or.
\newblock Fault-tolerant quantum computation with constant error rate.
\newblock {\em {SIAM} Journal on Computing}, 38(4):1207--1282, January 2008.

\bibitem{Flhmann2019}
C.~Fl\"{u}hmann, T.~L. Nguyen, M.~Marinelli, V.~Negnevitsky, K.~Mehta, and
  J.~P. Home.
\newblock Encoding a qubit in a trapped-ion mechanical oscillator.
\newblock {\em Nature}, 566(7745):513--517, February 2019.

\bibitem{Grimsmo2020}
Arne~L. Grimsmo, Joshua Combes, and Ben~Q. Baragiola.
\newblock Quantum computing with rotation-symmetric bosonic codes.
\newblock {\em Physical Review X}, 10(1), March 2020.

\bibitem{Joshi2020}
Atharv Joshi, Kyungjoo Noh, and Yvonne~Y Gao.
\newblock Quantum information processing with bosonic qubits in circuit qed.
\newblock {\em Quantum Science and Technology}, 6(3):033001, 2021.

\bibitem{Wallman2016}
Joel~J. Wallman and Joseph Emerson.
\newblock Noise tailoring for scalable quantum computation via randomized
  compiling.
\newblock {\em Physical Review A}, 94(5), November 2016.

\bibitem{Temme2017}
Kristan Temme, Sergey Bravyi, and Jay~M. Gambetta.
\newblock Error mitigation for short-depth quantum circuits.
\newblock {\em Physical Review Letters}, 119(18), November 2017.

\bibitem{Endo2018}
Suguru Endo, Simon~C. Benjamin, and Ying Li.
\newblock Practical quantum error mitigation for near-future applications.
\newblock {\em Physical Review X}, 8(3), July 2018.

\bibitem{kandala2019error}
Abhinav Kandala, Kristan Temme, Antonio~D C{\'o}rcoles, Antonio Mezzacapo,
  Jerry~M Chow, and Jay~M Gambetta.
\newblock Error mitigation extends the computational reach of a noisy quantum
  processor.
\newblock {\em Nature}, 567(7749):491--495, 2019.

\bibitem{QDT2020}
Filip~B. Maciejewski, Zolt{\'{a}}n Zimbor{\'{a}}s, and Micha{\l} Oszmaniec.
\newblock Mitigation of readout noise in near-term quantum devices by classical
  post-processing based on detector tomography.
\newblock {\em Quantum}, 4:257, April 2020.

\bibitem{Bravyi2020}
Sergey Bravyi, Sarah Sheldon, Abhinav Kandala, David~C. Mckay, and Jay~M.
  Gambetta.
\newblock Mitigating measurement errors in multiqubit experiments.
\newblock {\em Phys. Rev. A}, 103:042605, Apr 2021.

\bibitem{Arute2020}
{Google AI Quantum and Collaborators}, Frank Arute, Kunal Arya, Ryan Babbush,
  Dave Bacon, Joseph~C. Bardin, Rami Barends, Sergio Boixo, Michael Broughton,
  Bob~B. Buckley, David~A. Buell, Brian Burkett, Nicholas Bushnell, Yu~Chen,
  Zijun Chen, Benjamin Chiaro, Roberto Collins, William Courtney, Sean Demura,
  Andrew Dunsworth, Edward Farhi, Austin Fowler, Brooks Foxen, Craig Gidney,
  Marissa Giustina, Rob Graff, Steve Habegger, Matthew~P. Harrigan, Alan Ho,
  Sabrina Hong, Trent Huang, William~J. Huggins, Lev Ioffe, Sergei~V. Isakov,
  Evan Jeffrey, Zhang Jiang, Cody Jones, Dvir Kafri, Kostyantyn Kechedzhi,
  Julian Kelly, Seon Kim, Paul~V. Klimov, Alexander Korotkov, Fedor Kostritsa,
  David Landhuis, Pavel Laptev, Mike Lindmark, Erik Lucero, Orion Martin,
  John~M. Martinis, Jarrod~R. McClean, Matt McEwen, Anthony Megrant, Xiao Mi,
  Masoud Mohseni, Wojciech Mruczkiewicz, Josh Mutus, Ofer Naaman, Matthew
  Neeley, Charles Neill, Hartmut Neven, Murphy~Yuezhen Niu, Thomas~E.
  O{\textquoteright}Brien, Eric Ostby, Andre Petukhov, Harald Putterman, Chris
  Quintana, Pedram Roushan, Nicholas~C. Rubin, Daniel Sank, Kevin~J. Satzinger,
  Vadim Smelyanskiy, Doug Strain, Kevin~J. Sung, Marco Szalay, Tyler~Y.
  Takeshita, Amit Vainsencher, Theodore White, Nathan Wiebe, Z.~Jamie Yao, Ping
  Yeh, and Adam Zalcman.
\newblock Hartree-fock on a superconducting qubit quantum computer.
\newblock {\em Science}, 369(6507):1084--1089, 2020.

\bibitem{Chen2019}
Yanzhu Chen, Maziar Farahzad, Shinjae Yoo, and Tzu-Chieh Wei.
\newblock Detector tomography on {IBM} quantum computers and mitigation of an
  imperfect measurement.
\newblock {\em Physical Review A}, 100(5), November 2019.

\bibitem{Geller2020}
Michael~R Geller.
\newblock Rigorous measurement error correction.
\newblock {\em Quantum Science and Technology}, 5(3):03LT01, June 2020.

\bibitem{funcke2022measurement}
Lena Funcke, Tobias Hartung, Karl Jansen, Stefan K{\"u}hn, Paolo Stornati, and
  Xiaoyang Wang.
\newblock Measurement error mitigation in quantum computers through classical
  bit-flip correction.
\newblock {\em Physical Review A}, 105(6):062404, 2022.

\bibitem{Takahashi2020}
Yasuhiro Takahashi, Yuki Takeuchi, and Seiichiro Tani.
\newblock {Classically Simulating Quantum Circuits with Local Depolarizing
  Noise}.
\newblock In Javier Esparza and Daniel Kr{\'a}ľ, editors, {\em 45th
  International Symposium on Mathematical Foundations of Computer Science (MFCS
  2020)}, volume 170 of {\em Leibniz International Proceedings in Informatics
  (LIPIcs)}, pages 83:1--83:13, Dagstuhl, Germany, 2020. Schloss
  Dagstuhl--Leibniz-Zentrum f{\"u}r Informatik.

\bibitem{Butler2018}
T.~Butler, J.~Jakeman, and T.~Wildey.
\newblock Combining push-forward measures and bayes{\textquotesingle} rule to
  construct consistent solutions to stochastic inverse problems.
\newblock {\em {SIAM} Journal on Scientific Computing}, 40(2):A984--A1011,
  January 2018.

\bibitem{Sheldon2016}
Sarah Sheldon, Easwar Magesan, Jerry~M. Chow, and Jay~M. Gambetta.
\newblock Procedure for systematically tuning up cross-talk in the
  cross-resonance gate.
\newblock {\em Physical Review A}, 93(6), June 2016.

\bibitem{Arute2019}
Frank Arute, Kunal Arya, Ryan Babbush, Dave Bacon, Joseph~C. Bardin, Rami
  Barends, Rupak Biswas, Sergio Boixo, Fernando G. S.~L. Brandao, David~A.
  Buell, Brian Burkett, Yu~Chen, Zijun Chen, Ben Chiaro, Roberto Collins,
  William Courtney, Andrew Dunsworth, Edward Farhi, Brooks Foxen, Austin
  Fowler, Craig Gidney, Marissa Giustina, Rob Graff, Keith Guerin, Steve
  Habegger, Matthew~P. Harrigan, Michael~J. Hartmann, Alan Ho, Markus Hoffmann,
  Trent Huang, Travis~S. Humble, Sergei~V. Isakov, Evan Jeffrey, Zhang Jiang,
  Dvir Kafri, Kostyantyn Kechedzhi, Julian Kelly, Paul~V. Klimov, Sergey Knysh,
  Alexander Korotkov, Fedor Kostritsa, David Landhuis, Mike Lindmark, Erik
  Lucero, Dmitry Lyakh, Salvatore Mandr{\`{a}}, Jarrod~R. McClean, Matthew
  McEwen, Anthony Megrant, Xiao Mi, Kristel Michielsen, Masoud Mohseni, Josh
  Mutus, Ofer Naaman, Matthew Neeley, Charles Neill, Murphy~Yuezhen Niu, Eric
  Ostby, Andre Petukhov, John~C. Platt, Chris Quintana, Eleanor~G. Rieffel,
  Pedram Roushan, Nicholas~C. Rubin, Daniel Sank, Kevin~J. Satzinger, Vadim
  Smelyanskiy, Kevin~J. Sung, Matthew~D. Trevithick, Amit Vainsencher, Benjamin
  Villalonga, Theodore White, Z.~Jamie Yao, Ping Yeh, Adam Zalcman, Hartmut
  Neven, and John~M. Martinis.
\newblock Quantum supremacy using a programmable superconducting processor.
\newblock {\em Nature}, 574(7779):505--510, October 2019.

\bibitem{algs}
J.~{Abhijith}, Adetokunbo {Adedoyin}, John {Ambrosiano}, Petr {Anisimov},
  Andreas {B{\"a}rtschi}, William {Casper}, Gopinath {Chennupati}, Carleton
  {Coffrin}, Hristo {Djidjev}, David {Gunter}, Satish {Karra}, Nathan {Lemons},
  Shizeng {Lin}, Alexander {Malyzhenkov}, David {Mascarenas}, Susan
  {Mniszewski}, Balu {Nadiga}, Daniel {O'Malley}, Diane {Oyen}, Scott {Pakin},
  Lakshman {Prasad}, Randy {Roberts}, Phillip {Romero}, Nandakishore {Santhi},
  Nikolai {Sinitsyn}, Pieter~J. {Swart}, James~G. {Wendelberger}, Boram {Yoon},
  Richard {Zamora}, Wei {Zhu}, Stephan {Eidenbenz}, Patrick~J. {Coles}, Marc
  {Vuffray}, and Andrey~Y. {Lokhov}.
\newblock {Quantum Algorithm Implementations for Beginners}.
\newblock {\em arXiv e-prints}, page arXiv:1804.03719, April 2018.

\bibitem{muqing_zheng_2022_7005312}
Muqing Zheng.
\newblock Qcol-lu/bayesian-error-characterization-and-mitigation, August 2022.

\bibitem{anaBF}
Ryan ODonnell.
\newblock {\em Analysis of boolean functions}.
\newblock Cambridge University Press, 2014.

\bibitem{RStan}
{Stan Development Team}.
\newblock {RStan}: the {R} interface to {Stan}, 2020.
\newblock R package version 2.21.1.

\bibitem{R}
{R Core Team}.
\newblock {\em R: A Language and Environment for Statistical Computing}.
\newblock R Foundation for Statistical Computing, Vienna, Austria, 2019.
\newblock R version 3.6.2.

\bibitem{Qiskit}
Gadi Aleksandrowicz, Thomas Alexander, Panagiotis Barkoutsos, Luciano Bello,
  Yael Ben-Haim, David Bucher, Francisco~Jose Cabrera-Hernández, Jorge
  Carballo-Franquis, Adrian Chen, Chun-Fu Chen, Jerry~M. Chow, Antonio~D.
  C{\'{o}}rcoles-Gonzales, Abigail~J. Cross, Andrew Cross, Juan Cruz-Benito,
  Chris Culver, Salvador De La~Puente González, Enrique De~La Torre, Delton
  Ding, Eugene Dumitrescu, Ivan Duran, Pieter Eendebak, Mark Everitt,
  Ismael~Faro Sertage, Albert Frisch, Andreas Fuhrer, Jay Gambetta, Borja~Godoy
  Gago, Juan Gomez-Mosquera, Donny Greenberg, Ikko Hamamura, Vojtech Havlicek,
  Joe Hellmers, Łukasz Herok, Hiroshi Horii, Shaohan Hu, Takashi Imamichi,
  Toshinari Itoko, Ali Javadi-Abhari, Naoki Kanazawa, Anton Karazeev, Kevin
  Krsulich, Peng Liu, Yang Luh, Yunho Maeng, Manoel Marques, Francisco~Jose
  Martín-Fernández, Douglas~T. McClure, David McKay, Srujan Meesala, Antonio
  Mezzacapo, Nikolaj Moll, Diego~Moreda Rodríguez, Giacomo Nannicini, Paul
  Nation, Pauline Ollitrault, Lee~James O'Riordan, Hanhee Paik, Jesús
  P{\'{e}}rez, Anna Phan, Marco Pistoia, Viktor Prutyanov, Max Reuter, Julia
  Rice, Abdón~Rodríguez Davila, Raymond Harry~Putra Rudy, Mingi Ryu, Ninad
  Sathaye, Chris Schnabel, Eddie Schoute, Kanav Setia, Yunong Shi, Adenilton
  Silva, Yukio Siraichi, Seyon Sivarajah, John~A. Smolin, Mathias Soeken,
  Hitomi Takahashi, Ivano Tavernelli, Charles Taylor, Pete Taylour, Kenso
  Trabing, Matthew Treinish, Wes Turner, Desiree Vogt-Lee, Christophe Vuillot,
  Jonathan~A. Wildstrom, Jessica Wilson, Erick Winston, Christopher Wood,
  Stephen Wood, Stefan W{\"{o}}rner, Ismail~Yunus Akhalwaya, and Christa
  Zoufal.
\newblock {Qiskit: An Open-source Framework for Quantum Computing}, January
  2019.
\newblock Version 0.14.2.

\bibitem{Hicks2021}
Rebecca Hicks, Christian~W. Bauer, and Benjamin Nachman.
\newblock Readout rebalancing for near-term quantum computers.
\newblock {\em Physical Review A}, 103(2), February 2021.

\bibitem{NC2010}
Michael~A. Nielsen and Isaac~L. Chuang.
\newblock {\em Quantum computation and quantum information}.
\newblock Cambridge University Press, 2010.

\bibitem{space_red_meas}
Paul~D. {Nation}, Hwajung {Kang}, Neereja {Sundaresan}, and Jay~M. {Gambetta}.
\newblock {Scalable mitigation of measurement errors on quantum computers}.
\newblock {\em arXiv e-prints}, page arXiv:2108.12518, August 2021.

\bibitem{tensorform}
Paul~E. Buis and Wayne~R. Dyksen.
\newblock Efficient vector and parallel manipulation of tensor products.
\newblock {\em {ACM} Transactions on Mathematical Software}, 22(1):18--23,
  March 1996.

\bibitem{Tacchino2019}
Francesco Tacchino, Alessandro Chiesa, Stefano Carretta, and Dario Gerace.
\newblock Quantum computers as universal quantum simulators: State-of-the-art
  and perspectives.
\newblock {\em Advanced Quantum Technologies}, 3(3):1900052, December 2019.

\end{thebibliography}

\end{document}